\def\amsbb{\use@mathgroup \M@U \symAMSb}
\newcommand{\changes}[1]{#1}
\newcommand{\nbox}[2][9]{\hspace{#1pt} \mbox{#2} \hspace{#1pt}}
\newtheorem{thm}{Theorem}[section]
\newtheorem{lemma}[thm]{Lemma}
\newtheorem{prop}{Proposition}
\newtheorem{conj}{Conjecture}
\newcommand{\targetstate}{\Phi_{A'B'}}
\newcommand{\targetstatemix}{\sigma_{A'B'}}
\newcommand{\inputstate}{\rho_{AB}}
\newcommand{\auxstate}{\sigma_{A''B''}}
\newcommand{\rhoXYAB}{\rho_{XY\!AB}}
\newcommand{\pifour}{\pi/4}
\newcommand{\pitwo}{\pi/2}
\newcommand{\X}{\mathsf{X}}
\newcommand{\Y}{\mathsf{Y}}
\newcommand{\Z}{\mathsf{Z}}
\newcommand{\q}{\sqrt{2}}
\newcommand{\unit}{\mathbb{1}}
\newcommand{\meth}{STOPI}
\newcommand{\p}{\textnormal{P}}
\newcommand{\betaCHSH}{\beta_{\textnormal{CHSH}}^{*}}
\newcommand{\thrviol}{2.0014}
\newcommand{\lmax}{\lambda_{\textnormal{max}}}
\newcommand{\lmin}{\lambda_{\textnormal{min}}}
\DeclareMathOperator{\rk}{rk}
\DeclareMathOperator{\spec}{spec}
\DeclareMathOperator{\tr}{tr}
\DeclareMathOperator{\id}{id}
\def \diracspacing {0.7pt}
\newcommand{\ket}[1]{| \hspace{\diracspacing} #1 \rangle} 
\newcommand{\ketbra}[2]{| \hspace{\diracspacing} #1 \rangle \langle #2 \hspace{\diracspacing} |} 
\newcommand{\ketbraq}[1]{\ketbra{#1}{#1}} 
\newcommand{\tran}[0]{^\textnormal{\tiny{T}}}
\newcommand{\norm}[2][]{#1| \! #1| #2 #1| \! #1|}
\newcommand{\ave}[2][]{#1\langle #2 #1\rangle}
\newcommand{\abs}[2][]{#1| #2 #1|}
\newcommand{\cL}{\mathcal{L}}
\newcommand{\cS}{\mathcal{S}}
\newcommand{\sH}{\mathcal{H}}
\newcommand{\extractability}
{
\Xi( \inputstate \rightarrow \targetstate ) := \max_{ \Lambda_{A}, \Lambda_{B} } F \big( ( \Lambda_{A} \otimes \Lambda_{B} ) ( \inputstate ) , \targetstate \big)
}
\newcommand{\Fiso}{F_{\textnormal{iso}}}
\newcommand{\FMY}{F_{\textnormal{MY}}}
\newcommand{\salphadef}
{
\frac{ \big( \sqrt{ 8 + 2 \alpha^{2} } + 2 + \alpha \big) \big( 3 \sqrt{ 8 + 2 \alpha^{2} } - \sqrt{ 4 - \alpha^{2} } - \alpha \sqrt{ 2 } \big) }{4 ( 2 - \alpha )^2 \sqrt{ 8 + 2 \alpha^{2} } }
}
\newcommand{\balphastar}
{
b_{\alpha}^{*} := \arcsin \bigg( \sqrt{ \frac{4 - \alpha^{2}}{8} } \bigg)
}
\newcommand{\gfundef}
{
g(x) := ( 1 + \sqrt{2} ) ( \sin x + \cos x - 1)
}
\newcommand{\params}
{
\begin{align*}
s_{\alpha} &:= \salphadef,\\
\mu_{\alpha} &:= 1 - s_{\alpha} \cdot \sqrt{ 8 + 2\alpha^2 }.
\end{align*}
}
\newcommand{\alpharan}
{
\alpha \in [0, 2)
}
\newcommand{\ewav}{\varepsilon_{\textnormal{wav}}}
\newcommand{\defcq}[3]
{
#1 = \sum_{x, y=0}^{2} #2 \ketbraq{x}_{X} \otimes \ketbraq{y}_{Y} \otimes #3^{xy}
}
\newcommand{\finexp}[1]
{
4 \big[ 8 \varepsilon_{00} + 16 (\varepsilon_{02} + \varepsilon_{20} + \varepsilon_{22}) + 3 \varepsilon_{#1} \big]
}
\newcommand{\finexpu}
{
\frac{1}{62} \big[ 8 \varepsilon_{00} + 16 (\varepsilon_{02} + \varepsilon_{20} + \varepsilon_{22}) + 3 \varepsilon_{01} + 3 \varepsilon_{10} \big]
}
\newtheorem*{prop:counterexample-1}{Proposition \ref*{prop:ewav-eigenvalue}}
\newtheorem*{prop:counterexample-2}{Proposition \ref*{prop:centre-bound}}
\newcommand{\propewaveigenvalue}
{
Let
\begin{equation*}
\rho_{AB}^{xy} =
\begin{cases}
\ketbraq{11} &\nbox{if} (x, y) = (0, 0),\\
\frac{1}{2} \big( \ketbraq{00} + \ketbraq{11} \big) &\nbox{if} (x, y)
= (0, 1), (0, 2), (1, 0), (2, 0), \\
\frac{1}{2} \big( \ketbraq{01} + \ketbraq{10} \big) &\nbox{if} (x, y) = (2, 2).
\end{cases}
\end{equation*}
For these six points for fixed extraction channels $\Lambda_{A}^{x}$ and $\Lambda_{B}^{y}$ define
\begin{equation}
\label{eq:epsilonxy}
\varepsilon_{xy} := \frac{1}{2} - \ave[\big]{ ( \Lambda_{A}^{x} \otimes \Lambda_{B}^{y} )( \rho_{AB}^{xy} ), \targetstate^{+} }.
\end{equation}
Note that $\varepsilon_{xy} \geq 0$, since the states $\rho_{AB}^{xy}$
are separable. If
\begin{equation*}
\omega_{A} := \Lambda_{A}^{1} \bigg( \frac{ \unit_2 }{2} \bigg) \nbox{and} \omega_{B} := \Lambda_{B}^{1} \bigg( \frac{ \unit_2 }{2} \bigg),
\end{equation*}
then
\begin{align*}
\lmin( \omega_{A} ) &\leq \finexp{10},\\
\lmin( \omega_{B} ) &\leq \finexp{01}.
\end{align*}
In particular, we have $\lmin( \omega_{A} ), \lmin( \omega_{B} ) \leq 248 \ewav$ for
\begin{equation*}
\ewav := \finexpu.
\end{equation*}
}
\newcommand{\propcentrebound}
{
Let $\Lambda_{A}, \Lambda_{B}$ be qubit channels such that the smaller eigenvalues of the normalised qubit density matrices
\begin{equation*}
\Lambda_{A} \bigg( \frac{ \unit_2 }{2} \bigg) \nbox{and} \Lambda_{B} \bigg( \frac{ \unit_2 }{2} \bigg)
\end{equation*}
are at most $\lambda$. Then, for any pair of  maximally entangled two-qubit states $\Psi_{1}, \Psi_{2}$ we have
\begin{equation*}
\ave[\big]{ ( \Lambda_{A} \otimes \Lambda_{B} )( \Psi_{1} ), \Psi_{2} } \leq \frac{1}{2} + 2 \lambda.
\end{equation*}
}
\definecolor{darkred}{RGB}{130, 0, 0}
\definecolor{darkgreen}{RGB}{10, 70, 10}
\definecolor{darkblue}{RGB}{0, 0, 130}
\begin{document}
\title{Robust self-testing of two-qubit states}
\author{Tim Coopmans}
\affiliation{QuTech, Delft University of Technology, Lorentzweg 1, 2628 CJ Delft, The Netherlands}
\author{J\k{e}drzej Kaniewski}
\affiliation{Center for Theoretical Physics, Polish Academy of Sciences, Al.~Lotnik{\'o}w 32/46, 02-668 Warsaw, Poland}
\author{Christian Schaffner}
\affiliation{QuSoft, University of Amsterdam, Science Park 123, 1098 XG Amsterdam, The Netherlands}
\date{\today}
\begin{abstract}
It is well-known that observing nonlocal correlations allows us to draw conclusions about the quantum systems under consideration. In some cases this yields a characterisation which is essentially complete, a phenomenon known as self-testing. Self-testing becomes particularly interesting if we can make the statement robust, so that it can be applied to a real experimental setup. For the simplest self-testing scenarios the most robust bounds come from the method based on operator inequalities. In this work we elaborate on this idea and apply it to the family of tilted CHSH inequalities. These inequalities are maximally violated by partially entangled two-qubit states and our goal is to estimate the quality of the state based only on the observed violation. For these inequalities we have reached a candidate bound and while we have not been able to prove it analytically, we have gathered convincing numerical evidence that it holds. Our final contribution is a proof that in the usual formulation, the CHSH inequality only becomes a self-test when the violation exceeds a certain threshold. This shows that self-testing scenarios fall into two distinct classes depending on whether they exhibit such a threshold or not.
\end{abstract}
\maketitle
\section{Introduction}
Among the many sins of quantum mechanics, correlations between space-like separated systems occupy a rather special place. Stronger-than-classical correlations~\cite{einstein35a, bell64a} were initially seen as a problem, but have now become an inherent (and useful) feature of the quantum world. Investigating the difference between correlations achievable in quantum mechanics and in classical (local-realistic) theories goes under the name of Bell nonlocality~\cite{brunner14a}, and one of the great achievements of this field is the ability to rule out any classical description of the system under consideration based \emph{only} on the observed statistics. While clearly of fundamental importance, it turns out that this argument can be pushed one step further.

If we can rule out a classical description, our next guess is that the system is governed by quantum mechanics. Under this assumption it makes sense to ask which features of the quantum system give rise to such strikingly non-classical behaviour. Can we, for instance, deduce something about the quantum state or the measurements performed?

While it is clear that in order to observe nonlocal correlations one must perform incompatible measurements on entangled quantum systems, it is not clear which meaningful quantitative statements can be made. It might, therefore, come as a surprise that certain nonlocal correlations can be realised in an essentially unique manner. While this observation can be found in the early works of Tsirelson~\cite{tsirelson87a, tsirelson93a}, Summers and Werner~\cite{summers87a} and Popescu and Rohrlich~\cite{popescu92a}, it did not attract much attention until the seminal work of Mayers and Yao~\cite{mayers98a, mayers04a}. Mayers and Yao realised that this effect can be used to \emph{certify} quantum devices under minimal assumptions and they called this phenomenon \emph{self-testing}.

The goal of self-testing is to make quantitative statements about the \emph{quantum realisation}, e.g.~about the entanglement present in a quantum state or about the incompatibility of the measurements performed. Self-testing is closely related to the field of \emph{device-independent cryptography} whose goal is to certify properties of the classical output produced by quantum devices. Device-independent cryptography is a promising solution for randomness generation~\cite{colbeck06a, pironio10a, colbeck11a, vazirani12a, miller16b, bouda14a}, quantum key distribution~\cite{barrett05b, acin06a, acin07a, reichardt13a, vazirani14a, arnonfriedman18a} and several other tasks~\cite{silman11a, kaniewski16a, ribeiro18a, ribeiro16a, ribeiro18b}. For a brief overview of device-independent cryptography, we recommend Ref.~\cite{ekert14a} (focus on quantum key distribution) and Ref.~\cite{acin16a} (focus on randomness generation). For a comprehensive review on both philosophical and technological aspects of randomness in quantum physics, we refer the reader to Ref.~\cite{bera17a}.

In this work, we focus solely on the task of self-testing in its most common formulation, i.e.~when the goal is to certify the state and the measurements performed on it.\footnote{Note that other quantum objects such as quantum channels~\cite{sekatski18a}, entangled measurements~\cite{bancal18a, renou18a} or weak measurements~\cite{wagner18a} can be self-tested in more complex scenarios.} While there is a large class of scenarios in which self-testing statements have been proven, most results only apply if the observed statistics are (almost) perfect~\cite{bardyn09a, mckague14a, mckague12a, yang13a, bamps15a, mckague16a, wang16a, supic16a, mckague17a, coladangelo17a, kalev17a, andersson17a, supic18a, coladangelo17c}. While such results are \emph{robust} in the sense that they are stable under sufficiently small perturbations, the obtained noise tolerance is not relevant from the experimental point of view. Our goal, on the other hand, is to derive self-testing statements which can be applied to real statistics collected in real experiments.\footnote{For an intuitive explanation of the difference between robustness and experimentally-relevant robustness see Section I of Ref.~\cite{kaniewski17a}.} Such results are of interest to both experimentalists~\cite{tan17a, zhang19a, zhang18a} and theoreticians investigating specific physical setups~\cite{lee17a}, but deriving them turns out to be significantly more challenging.

The first result of this type is due to Bardyn et al.~\cite{bardyn09a} and there are currently two methods of deriving such results: the ``swap method''~\cite{bancal15a, yang14a, pal14a} and the ``self-testing from operator inequalities ({\meth}) method''~\cite{kaniewski16b}. While the swap method is extremely versatile and can be (at least in principle) immediately applied to any Bell scenario, it has two weaknesses. First of all, it is a numerical method which scales unfavourably with the dimension of the system we wish to certify: the largest states certified using this method until today consist of two ququarts~\cite{wu16a} or four qubits~\cite{pal14a}. The second, and more severe, disadvantage of the swap method is that the output of the computation is just a number, which gives little intuition about the underlying physics.

The {\meth} method, on the other hand, is more time-consuming, as it requires a more thorough understanding of the particular self-testing scenario, but the resulting bounds are significantly stronger (in some cases even tight). In Ref.~\cite{kaniewski16b} the {\meth} method was used to derive analytic self-testing bounds for the CHSH~\cite{clauser69a} and Mermin~\cite{mermin90a} inequalities. In this work we applied this method to self-test partially entangled pure two-qubit states using the family of tilted CHSH inequalities. Investigating some special cases led us to conjecture a particular form of the self-testing statement. While we were not able to prove it analytically, we have gathered strongly convincing numerical evidence that it holds. The conjectured statement improves on the bounds obtained from the swap method~\cite{bancal15a}.

Our second contribution is a proof that the CHSH inequality becomes a self-test only above a certain violation. More specifically, we have constructed a state which violates the CHSH inequality, but does not satisfy the usual self-testing criteria. This is in contrast with the Mermin inequality in which the value of the self-testing threshold coincides with the maximum value achievable if only two out of three parties are entangled.

\medskip
In Section~\ref{sec:preliminaries} we formalise the problem of self-testing, while in Section~\ref{sec:stopi} we explain the {\meth} method. In Section~\ref{sec:tilted-chsh} we present the conjectured robust self-testing bounds for partially entangled two-qubit states. In Section~\ref{sec:nontrivial-threshold} we explain the construction of the state that violates the CHSH inequality, but for which none of the usual self-testing statements can be made. In Section~\ref{sec:conclusions} we summarise our results and discuss some open problems.
\section{Preliminaries}
\label{sec:preliminaries}
In this section we establish the basic notation and formalise the problem of self-testing.
\subsection{Notation}
We denote the identity matrix by $\unit$ and the Pauli matrices by $\X, \Y, \Z$. For a Hermitian matrix $X$ we use $\lmax(X)$ and $\lmin(X)$ to denote its largest and smallest eigenvalue, respectively.

For arbitrary linear operators $X$ and $Y$ we use $\ave{X, Y} := \tr( X^{\dagger} Y )$ to denote the Hilbert-Schmidt inner product and $\norm{X}_{p}$ to denote the Schatten $p$-norm. For a positive semidefinite operator $A$, $B := \sqrt{A}$ is the unique positive semidefinite operator satisfying $B^{2} = A$. The fidelity of two positive semidefinite operators $A$ and $B$ is defined as $F(A, B) = \norm[\big]{ \sqrt{A} \sqrt{B} }_{1}^{2}$.

The Hilbert space corresponding to register $A$ is denoted by $\sH_{A}$ and in this work we assume all the Hilbert spaces to be finite-dimensional. The set of linear operators acting on $\sH$ is denoted by $\cL(\sH)$.

For a completely positive map $\Lambda : \cL(\sH_{A}) \to \cL(\sH_{B})$, the dual map $\Lambda^{\dagger} : \cL(\sH_{B}) \to \cL(\sH_{A})$ is the unique linear map which satisfies $\ave{ \Lambda(X), Y } = \ave{ X, \Lambda^{\dagger}(Y) }$ for all $X \in \cL(\sH_{A})$ and $Y \in \cL(\sH_{B})$. The map $\Lambda$ is a quantum channel if it is trace-preserving, which is equivalent to the dual map $\Lambda^{\dagger}$ being unital, i.e.~$\Lambda^{\dagger}(\unit_{B}) = \unit_{A}$.

The Choi-Jamio{\l}kowski isomorphism states that completely positive maps $\Lambda : \cL(\sH_{A}) \to \cL(\sH_{B})$ are in 1-1 correspondence with positive semidefinite operators acting on $\sH_{A} \otimes \sH_{B}$. Let $\{ \ket{j} \}_{j = 1}^{d}$ be the standard basis on $\sH_{A}$, let $\sH_{A'} \cong \sH_{A}$ and let
\begin{equation*}
\Omega_{AA'} = \ketbraq{\Omega}_{AA'} \nbox{for} \ket{\Omega}_{AA'} = \sum_{j = 1}^{d} \ket{j}_{A} \ket{j}_{A'}
\end{equation*}
be an unnormalised maximally entangled state. The \changes{(unnormalised)} Choi state of $\Lambda$, denoted by $C_{AB}$, is defined as
\begin{equation*}
C_{AB} := ( \id_{A} \otimes \Lambda_{A'} )(\Omega_{AA'})
\end{equation*}
and it is well-known that for any $X \in \cL(\sH_{A})$
\begin{equation*}
\Lambda(X) = \tr_{A} \big[ C_{AB} (X_{A}\tran \otimes \unit_{B}) \big],
\end{equation*}
where $\tran$ denotes the transpose in the standard basis. If $\Lambda$ is trace-preserving, then $C_{A} = \unit$, whereas if $\Lambda$ is unital, then $C_{B} = \unit$.
\subsection{Self-testing of quantum states}
\label{sec:formulation}
Consider the usual Bell scenario in which two space-like separated parties, Alice and Bob, perform local measurements on a shared quantum state. Alice and Bob would like to certify that the state they share is entangled, but as they do not trust their measurement devices, they are unable to perform full state tomography. Their only option is to choose measurement settings, observe the outcomes and collect statistics. To simplify the problem we assume that their devices behave in the same way every time they are used, i.e.~that they give rise to a well-defined conditional probability distribution $\Pr( a, b | x, y )$, where $a$ and $b$ are outputs and $x$ and $y$ are inputs of Alice and Bob, respectively. Since the probability vector $\p = ( \Pr( a, b | x, y ) )_{abxy}$ can be estimated to arbitrary precision and we are interested in the fundamental aspects of self-testing, we assume to have access directly to the exact probability distribution $\p$.\footnote{Not surprisingly drawing conclusions from a finite set of data is significantly harder, see Refs.~\cite{gill14a, elkouss16a, lin18a}.} 

From a mathematical point of view, self-testing of quantum states boils down to the following question:
\begin{center}
``Given a conditional probability distribution
\begin{equation*}
\p = ( \Pr( a, b | x, y ) )_{abxy}
\end{equation*}
which comes from measuring a quantum system, i.e.~
\begin{equation*}
\Pr( a, b | x, y ) = \tr \big[ ( P_{a}^{x} \otimes Q_{b}^{y} ) \inputstate \big],
\end{equation*}
what can we deduce about the unknown state $\inputstate$?''
\end{center}
We intentionally denote the unknown state by $\inputstate$, as we do not want to assume its purity.\footnote{Under the purity assumption even classical correlations are sufficient to certify entanglement~\cite{sikora16a}.} \changes{Let us also emphasise here that no knowledge of the observables is assumed, which makes self-testing a significantly different problem from quantum state tomography.}

It is important to realise that the observed statistics $\Pr( a, b | x, y )$ can never  \emph{uniquely} determine the state. Indeed, the two equivalences we must always allow for are: (i) local isometries and (ii) the presence of additional degrees of freedom. Motivated by these limitations we say that $\inputstate$ \emph{contains} $\targetstatemix$ if there exist local quantum channels $\Lambda_{A} : \cL(\sH_{A}) \to \cL(\sH_{A'})$ and $\Lambda_{B} : \cL(\sH_{B}) \to \cL(\sH_{B'})$ that \emph{extract} a perfect copy of $\targetstatemix$ from $\inputstate$, i.e.
\begin{equation}
\label{eq:perfect-extractability}
(\Lambda_A \otimes \Lambda_B) (\inputstate) = \targetstatemix.
\end{equation}
It is intuitively clear that this formulation is equivalent to the usual formulation using isometries and an auxiliary state, but for completeness we provide a proof in Appendix~\ref{app:formulation}.\footnote{At the end of Appendix~\ref{app:formulation}, we also point out that the formulation with unitaries instead of isometries is not quite correct.}

The concept of local extraction channels is well-aligned with the conditions of a Bell test in which Alice and Bob are only allowed local measurements (no communication) and they must always produce an outcome (from a fixed alphabet). Similarly, we require the extraction channels to act locally and deterministically produce a state (of the correct dimension).

Replacing local extraction channels by a distillation procedure, i.e.~allowing for classical communication, completely changes the problem. Note that the same phenomenon occurs in Bell nonlocality, where states can be preprocessed to enhance their nonlocal properties~\cite{liang06a}.

A self-testing statement consists of two components: (i) a quantum-realisable probability distribution $\p^{*}$ and (ii) a pure bipartite state $\targetstate$. The statement asserts that if an unknown state $\inputstate$ is capable of producing the probability distribution $\p^{*}$ (under some local measurements), then $\inputstate$ must contain $\targetstate$.

Of course, in a real experiment one never actually observes the exact probability distribution $\p^{*}$,\footnote{The two most obvious obstacles are experimental noise and finite statistics.} which means that a new, robust version of Eq.~\eqref{eq:perfect-extractability} is needed. For exactly that purpose the channel formulation is particularly convenient, as it is immediately clear how to turn the original requirement into an approximate statement. We define the \emph{extractability} of $\targetstate$ from $\inputstate$ as~\cite{bardyn09a, kaniewski16b}
\begin{equation}
\label{eq:extractability}
\extractability,
\end{equation}
where the maximisation is taken over all quantum channels from $A$ to $A'$ and $B$ to $B'$, respectively. It is clear that extractability is invariant under local unitaries applied to $\targetstate$, i.e.~it depends only on the Schmidt coefficients of the target state. The maximal value of extractability equals $1$ and implies that $\inputstate$ contains $\targetstate$. The lowest value, on the other hand, equals $\lambda_{0}^2$, where $\lambda_{0}$ is the largest Schmidt coefficient of $\targetstate$, because Alice and Bob can always replace $\inputstate$ with a pure product state. Moreover, extractability is convex in the input state, which implies that $\Xi( \inputstate \rightarrow \targetstate ) = \lambda_{0}^2$ whenever $\inputstate$ is separable. Note that there exist other measures for robust self-testing, but extractability is the only one for which experimentally-relevant robustness has been proven (see Appendix~\ref{app:robust-measures} for details).

In this language a self-testing statement says that if $\inputstate$ is capable of producing $\p^{*}$, then $\Xi( \inputstate \to \targetstate ) = 1$. A robust version states that observing statistics close to $\p^{*}$ implies that the extractability is close to $1$. More generally, we are interested in deriving a nontrivial lower bound on $\Xi( \inputstate \to \targetstate )$ as a function of the observed statistics.

In this work, instead of looking at the entire probability distribution $\p$, we focus on some suitably chosen Bell function. A Bell function is specified by a vector of real coefficients $( c_{abxy} )_{abxy}$, and its value evaluated on the probability distribution $\p$ equals
\begin{equation*}
\beta := \sum_{abxy} c_{abxy} \Pr( a, b | x, y ).
\end{equation*}
If $\beta_{C}$ and $\beta_{Q}$ are the maximal classical and quantum values, respectively, then our goal is to prove
\begin{equation}
\label{eq:self-testing-statement}
\Xi( \inputstate \rightarrow \targetstate ) \geq f(\beta)
\end{equation}
for some explicit function $f : [ \beta_{C}, \beta_{Q} ] \to [0, 1]$. While in principle $f$ could be an arbitrary function, we can without loss of generality assume that it is non-decreasing. Since any state capable of producing the Bell violation of $\beta$ is also capable of producing any violation in the interval $[\beta_{C}, \beta]$, we can define
\begin{equation*}
f_{\textnormal{nd}}(\beta) := \sup_{x \in [\beta_{C}, \beta]} f(x),
\end{equation*}
where the subscript in $f_{\textnormal{nd}}$ stands for \emph{non-decreasing}, and we immediately see that
\begin{equation*}
\Xi( \inputstate \rightarrow \targetstate ) \geq f_{\textnormal{nd}}(\beta).
\end{equation*}
While such trade-offs could be investigated for arbitrary combinations of target state and Bell function, the term self-testing is only used if the maximal violation of the Bell function certifies the presence of the target state, i.e.~$f( \beta_{Q} ) = 1$. A self-testing statement is called robust if $f(\beta) \to 1$ as $\beta \to \beta_{Q}$.

An important advantage of self-testing statements based only on the Bell value is the fact that we can assess their tightness by deriving an explicit upper bound on $f(\beta)$. If the Bell inequality is not violated, we cannot improve over the trivial bound of $\lambda_{0}^{2}$, i.e.~$f(\beta_{C}) = \lambda_{0}^{2}$. On the other extreme, by assumption we have $f(\beta_{Q}) = 1$. Since every intermediate violation can be achieved as a mixture of these two points, we cannot hope to certify extractability larger than the value corresponding to such a mixture. This leads to an upper bound of the form
\begin{equation}
\label{eq:upper-bound}
f(\beta) \leq \lambda_{0}^{2} + (1 - \lambda_{0}^{2}) \cdot \frac{\beta - \beta_{C}}{\beta_{Q} - \beta_{C}}.
\end{equation}
This upper bound tells us how much room for improvement there potentially is and it is worth mentioning that in some scenarios, one can prove self-testing statements matching this upper bound~\cite{kaniewski16b}. A good indication of the strength of a self-testing bound is the critical Bell value above which the statement becomes nontrivial, i.e.
\begin{equation*}
\beta^{*}_{f} := \inf_{\beta} \big\{ f(\beta) > \lambda_{0}^{2} \big\}.
\end{equation*}
Clearly, $\beta^{*}_{f}$ is computed for a specific self-testing bound (i.e.~a particular function $f$) and is not a fundamental property of the Bell inequality under consideration.
\section{Self-testing from operator inequalities}
\label{sec:stopi}
The {\meth} method was introduced and applied to two specific examples in Ref.~\cite{kaniewski16b}. Here we provide a more detailed discussion of the underlying idea.

Our goal is to prove a lower bound on the extractability as a function of the observed Bell violation $\beta$. The {\meth} method is constructive: given a quantum realisation, which consists of the shared state $\rho_{AB}$, the measurements of Alice $\{ P_{a}^{x} \}$ and the measurements of Bob $\{ Q_{b}^{y} \}$, we explicitly construct the local extraction channels $\Lambda_{A}, \Lambda_{B}$ and we provide a lower bound on their performance as a function of $\beta$. The extraction channel of Alice $\Lambda_{A} : \cL(\sH_{A}) \to \cL(\sH_{A'})$ is built out of her measurement operators $\{ P_{a}^{x} \}$ and similarly the extraction channel of Bob $\Lambda_{B} : \cL(\sH_{B}) \to \cL(\sH_{B'})$ depends only on $\{ Q_{b}^{y} \}$. We are interested in the fidelity
\begin{equation*}
F \big( ( \Lambda_{A} \otimes \Lambda_{B} ) ( \inputstate ) , \targetstate \big),
\end{equation*}
but since $\targetstate$ is a pure state, we can replace the fidelity by the inner product, which allows us to replace the channels by their duals
\begin{align*}
F \big( ( \Lambda_{A} &\otimes \Lambda_{B} ) ( \inputstate ) , \targetstate \big)\\
&= \ave{ ( \Lambda_{A} \otimes \Lambda_{B} ) ( \inputstate ) , \targetstate }\\
&= \ave{ \inputstate , ( \Lambda_{A}^{\dagger} \otimes \Lambda_{B}^{\dagger} ) (\targetstate) }.
\end{align*}
Define
\begin{equation}
\label{eq:K-operator}
K := ( \Lambda_{A}^{\dagger} \otimes \Lambda_{B}^{\dagger} ) (\targetstate)
\end{equation}
and note that this operator depends \emph{only} on the measurement operators (and not on the input state $\inputstate$). Another operator that depends only on the measurement operators is the Bell operator defined as
\begin{equation*}
W := \sum_{abxy} c_{abxy} P_{a}^{x} \otimes Q_{b}^{y},
\end{equation*}
which by construction satisfies $\tr (W \inputstate) = \beta$. We might therefore hope to prove an operator inequality of the form
\begin{equation}
\label{eq:operator-inequality}
K \geq s W + \mu \unit
\end{equation}
for suitably chosen (real) constants $s$ and $\mu$. If we prove this operator inequality \emph{for all choices of local measurements} on Alice and Bob, it immediately implies that for \emph{any} input state $\inputstate$ we have
\begin{align*}
\Xi( \inputstate \rightarrow \targetstate ) &\geq F \big( ( \Lambda_{A} \otimes \Lambda_{B} ) ( \inputstate ) , \targetstate \big)\\
&= \ave{ \inputstate , K } \geq \ave{ \inputstate , sW + \mu \unit }\\
&= s \beta + \mu.
\end{align*}
Therefore, we obtain precisely a self-testing statement of the form given in Eq.~\eqref{eq:self-testing-statement} for
\begin{equation*}
f(\beta) = s \beta + \mu.
\end{equation*}
This approach reduces the problem of self-testing to three steps: (i) constructing suitable extraction channels, (ii) choosing the right constants $s$ and $\mu$ and (iii) proving the resulting operator inequality.
\subsection{Extraction channels from measurement operators}
\label{sec:extraction-channels}
Given a set of measurements operators $\{ P_{a}^{x} \}$ acting on $\sH_{A}$ we want to construct an extraction channel $\Lambda_{A} : \cL(\sH_{A}) \to \cL(\sH_{A'})$, where the Hilbert space $\sH_{A'}$ is determined by the target state. Let us first point out that for the purpose of deriving self-testing statements it suffices to construct channels for projective measurement operators. In the case of non-projective measurement operators, Alice starts her extraction procedure by enlarging her Hilbert space until she can find projective measurements reproducing precisely the same statistics. She would then construct an extraction channel using the new, projective measurement operators.

Instead of first constructing the channel and then taking its dual, it is easier to construct the dual channel $\Lambda^{\dagger} : \cL(\sH_{A'}) \to \cL(\sH_{A})$ directly and it is convenient to specify it through its Choi state. The dual channel must be unital, so the Choi state $C_{A'A}$ must satisfy $C_{A} = \unit$. If $\{ O_{j} \}_{j} \in \cL( \sH_{A'} )$ is an operator basis on $\cL( \sH_{A'} )$, the Choi state can be written as
\begin{equation*}
C_{A'A} := \sum_{j} O_{j} \otimes F_{j} \big( \{ P_{a}^{x} \} \big)
\end{equation*}
for some collection of functions $\{ F_{j} \}$ such that $F_{j} : [\cL( \sH_{A} )]^{\times k } \to \cL( \sH_{A} )$, where $k$ is the product of the number of inputs and outputs. In principle, the only restriction on $\{ F_{j} \}$ is that the resulting operator must be a valid Choi operator \emph{for all} sets of valid measurement operators $\{ P_{a}^{x} \}$, but it is natural to choose extraction channels satisfying certain conditions.

First of all, sets of measurement operators which are related by a unitary should be treated in an equivalent manner, i.e.
\begin{equation*}
F_{j} \big( \{ U P_{a}^{x} U^{\dagger} \} \big) = U F_{j} \big( \{ P_{a}^{x} \} \big) U^{\dagger}
\end{equation*}
for all unitaries $U$ and all $j$. We call such extraction channels \emph{covariant} with respect to the unitary group.

Moreover, whenever the measurement operators exhibit a certain direct-sum structure, the extraction channels should preserve it. Given one set of measurements $\{P_{a}^{x, 0}\}$ acting on $\sH_{A_{0}}$ and another set of measurements $\{P_{a}^{x, 1}\}$ acting on $\sH_{A_{1}}$, we should have
\begin{equation*}
F_{j} \big( \{ P_{a}^{x, 0} \oplus P_{a}^{x, 1} \} \big) = F_{j} \big( \{ P_{a}^{x, 0} \} \big) \oplus F_{j} \big( \{ P_{a}^{x, 1} \} \big).
\end{equation*}
Restricting ourselves to extraction channels satisfying these two criteria makes it easier to analyse the resulting operator inequalities. As explained in the next section these restrictions do not affect the obtained bounds.

Since the target state is pure, we can assume that $\sH_{B'} \cong \sH_{A'}$ and we can choose the same operator basis for $\sH_{B'}$. Analogous to $C_{A'A}$ the Choi state describing $\Lambda_{B}^{\dagger}$ reads
\begin{equation*}
C_{B'B} := \sum_{j} O_{j} \otimes G_{j} \big( \{ Q_{b}^{y} \} \big).
\end{equation*}
Computing the $K$ operator gives
\begin{align*}
K &= ( \Lambda_{A}^{\dagger} \otimes \Lambda_{B}^{\dagger} ) (\targetstate)\\
&= \tr_{A'B'} \big[ ( C_{A'A} \otimes C_{B'B} ) ( \targetstate\tran \otimes \unit_{AB} ) \big]\\
&= \sum_{jk} \alpha_{jk} \, F_{j} \big( \{ P_{a}^{x} \} \big) \otimes G_{k} \big( \{ Q_{b}^{y} \} \big),
\end{align*}
where $\alpha_{jk} := \tr \big[ ( O_{j} \otimes O_{k} ) \targetstate\tran \big]$.
\subsection{Choosing the constants}
Since we are interested in non-decreasing functions of $\beta$, we restrict ourselves to the case $s > 0$, but otherwise all values of $s$ are in principle worth considering. For a particular choice of extraction channels and $s$, we define
\begin{equation}
\label{eq:infimum}
\mu(s) := \inf \, \lmin ( K - s W ),
\end{equation}
where the infimum is taken over all possible measurements of Alice and Bob (in all finite dimensions). Clearly, this is simply the largest value of $\mu$ for which the operator inequality~\eqref{eq:operator-inequality} holds for all possible measurements. To see that $\mu(s)$ does not diverge to $- \infty$, note that
\begin{equation*}
\mu(s) \geq \inf \, \lmin ( - s W ) \geq - \sup \norm{ s W }_{\infty} \geq - s \sum_{abxy} \abs{ c_{abxy} }.
\end{equation*}

It should now be clear why the restrictions discussed in the previous section simplify the computation of $\mu(s)$. Requiring the extraction channels to be covariant ensures that the spectrum of $K - sW$ is not affected by applying local unitaries to the measurement operators of Alice and Bob, which significantly reduces the parameter space. Requiring the channels to preserve the direct-sum structure ensures that the same direct-sum structure is inherited by the operator $K - sW$ which facilitates bounding its spectrum.

The quantity $\mu(s)$ is in general hard to compute, but if we were able to
do so for a fixed choice of extraction channels, then we would obtain a
family of lower bounds of the form
\begin{equation*}
f_{s}(\beta) = s \beta + \mu(s)
\end{equation*}
parametrised by $s > 0$.\footnote{Every $s > 0$ gives a valid bound, but for poor choices of extraction channels and/or the parameter $s$ the bound might be trivial for the entire range of $\beta \in [\beta_{C}, \beta_{Q}]$.}  All these bounds could be collected in a single function defined as
\begin{equation*}
\sup_{s > 0} \big( s \beta + \mu(s) \big).
\end{equation*}
In fact, we could also optimise over the choice of extraction channels. Such an optimisation might seem particularly advantageous as we would expect that extraction channels in the regime $\beta \approx \beta_{Q}$ should be rather different from those in the regime $\beta \approx \beta_{C}$. It is, therefore, rather surprising that in all the examples considered in Ref.~\cite{kaniewski16b} and in this work, the best lower bounds come from a single choice of extraction channels and a single value of $s$. This situation stands in contrast with the swap method in which it is beneficial to tailor the extraction channels to the observed violation (see Eqs.~(33) and (34) of Ref.~\cite{bancal15a}).

In this work we focus on the case where all the systems are finite-dimensional, but the method can be equally well applied to infinite-dimensional systems as long as the construction of extraction channels from measurement operators and the proof of the relevant operator inequality carry over to the infinite-dimensional case.
\subsection{Extracting a qubit from two binary observables}
\label{sec:extracting-qubit}
A binary measurement $\{P_{0}, P_{1}\}$ can be conveniently represented as an observable $A := P_{0} - P_{1}$ (and since $P_{0} + P_{1} = \unit$ this mapping is a bijection). An observable is a Hermitian operator $A = A^{\dagger}$ satisfying $- \unit \leq A \leq \unit$, whereas projective measurements give rise to observables satisfying $A^{2} = \unit$.

The case of two binary observables is particularly simple due to Jordan's lemma which completely characterises the interaction between two projective observables. More specifically, it states that given two projective observables $A_{0}$ and $A_{1}$ one can find a unitary which simultaneously block-diagonalises $A_{0}$ and $A_{1}$ into blocks of size $1 \times 1$ or $2 \times 2$. There are four distinct types of $1 \times 1$ blocks corresponding to $A_{0} = \pm 1, A_{1} = \pm 1$, whereas the $2 \times 2$ blocks form a 1-parameter family given by
\begin{align}
\label{eq:A0-def}
A_{0} &:= \cos a \cdot \X + \sin a \cdot \Z,\\
\label{eq:A1-def}
A_{1} &:= \cos a \cdot \X - \sin a \cdot \Z
\end{align}
for $a \in (0, \pitwo)$. In Section~\ref{sec:extraction-channels} we have argued that by enlarging the Hilbert space we can focus solely on projective measurements. Similarly, in this case we could enlarge the Hilbert space to ensure that every $1 \times 1$ block is paired up with another suitably chosen $1 \times 1$ block such that the two together are unitarily equivalent to a $2 \times 2$ block corresponding to $a = 0$ or $a = \pitwo$. As before, this grouping operation would be the first step of the extraction channel. It is not strictly necessary, but it makes the analysis easier, since it ensures that the observables are just a direct sum of $2 \times 2$ blocks parametrised by $a \in [0, \pitwo]$.

Since we restrict ourselves to covariant extraction channels, we can assume that the observables are already in block-diagonal form. Moreover, the channels respect the direct-sum structure, which implies that we only need to propose a 1-parameter family of qubit channels corresponding to the $2 \times 2$ blocks. If the extraction channels for Alice and Bob are denoted by $\Lambda_{A}(a)$ and $\Lambda_{B}(b)$, respectively, then
\begin{equation*}
K(a, b) := \big( \Lambda_{A}^{\dagger}(a) \otimes \Lambda_{B}^{\dagger}(b) \big) ( \targetstate )
\end{equation*}
is a $4 \times 4$ operator. Similarly, let $W(a, b)$ be the $4 \times 4$ Bell operator constructed from local qubit observables corresponding to angles $a$ and $b$ for Alice and Bob, respectively. Thanks to the block structure, computing the lowest eigenvalue of $K - sW$ simplifies to
\begin{equation*}
\mu(s) := \inf \, \lmin ( K - s W ) = \min_{a, b} \, \lmin \big( K(a, b) - s W(a, b) \big),
\end{equation*}
where the minimisation is performed over the square $(a, b) \in [0, \pitwo] \times [0, \pitwo]$. This procedure is precisely the approach used to derive robust self-testing statements in Ref.~\cite{kaniewski16b}. In the following section, we apply it to the case of the tilted CHSH inequality.
\section{Robust self-testing of all entangled two-qubit states}
\label{sec:tilted-chsh}
In 2012 Ac{\'i}n, Pironio and Massar introduced a family of Bell functions which are now commonly referred to as the tilted CHSH family~\cite{acin12a}. The corresponding Bell operator reads
\begin{equation}
\label{eq:tilted-operator}
W_{\alpha} := \alpha A_{0} \otimes \unit + A_{0} \otimes ( B_{0} + B_{1} ) + A_{1} \otimes ( B_{0} - B_{1} ),
\end{equation}
where $\alpharan$ is a parameter. The classical and quantum values of this Bell function equal $\beta_{C} = 2 + \alpha$ and $\beta_{Q} = \sqrt{ 8 + 2 \alpha^{2} }$, respectively. Clearly, for all values of $\alpha$ we have $\beta_{Q} > \beta_{C}$, although the gap vanishes as $\alpha \to 2$. The quantum value can be achieved using a pure state of two qubits $\targetstate^{\alpha} = \ketbraq{\Phi^{\alpha}}_{A'B'}$ for
\begin{equation*}
\ket{\Phi^{\alpha}}_{A'B'} := \cos \theta_{\alpha} \ket{ u_{0} }_{A'} \ket{ v_{0} }_{B'} + \sin \theta_{\alpha} \ket{ u_{1} }_{A'} \ket{ v_{1} }_{B'},
\end{equation*}
where $\{ \ket{ u_{0} }, \ket{ u_{1} } \}$, $\{ \ket{ v_{0} }, \ket{ v_{1} } \}$ are some orthonormal bases on a qubit and
\begin{equation}
\label{eq:alpha-theta}
\theta_{\alpha} := \frac{1}{2} \arcsin \bigg( \sqrt{ \frac{4 - \alpha^{2}}{4 + \alpha^{2}} } \bigg).
\end{equation}
While the optimal observables of Alice are always maximally incompatible, which corresponds to setting $a = \pifour$ in Eqs.~\eqref{eq:A0-def} and~\eqref{eq:A1-def}, the optimal angle on Bob's side changes with $\alpha$ according to
\begin{equation*}
\balphastar.
\end{equation*}
Performing these measurements on this particular state turns out to be essentially the only manner of achieving the maximal violation, i.e.~this Bell inequality is a self-test~\cite{yang13a, bamps15a}. Since the range $\alpharan$ is mapped onto $\theta_{\alpha} \in (0, \pifour]$, it allows us to self-test every pure entangled state of two qubits.

Clearly, setting $\alpha = 0$ yields the CHSH inequality for which the {\meth} method gives strong self-testing bounds~\cite{kaniewski16b} and in this work we apply this approach to the entire range $\alpharan$.

Before stating the conjectured bound, let us briefly explain the construction of extraction channels and the choice of constants $s_{\alpha}$ and $\mu_{\alpha}$. The optimal channels for the CHSH case correspond to full dephasing in $\X$ for $a = 0$, full dephasing in $\Z$ for $a = \pitwo$ and identity channel for $a = \pifour$. This choice is correct for Alice, because her optimal angle is always $\pifour$, but for Bob we must introduce a modification which shifts the occurence of the identity channel to his optimal angle $b_{\alpha}^{*}$. This modification can be achieved by defining an \emph{effective angle} which uniformly extends the interval $[0, b_{\alpha}^{*}]$ to $[0, \pifour]$ and simultaneously shrinks the interval $[b_{\alpha}^{*}, \pitwo]$ to $[\pifour, \pitwo]$. After this modification one can check that this choice of channels performs well on the vertices of the square $(a, b) \in \{ (0, 0), (0, \pitwo), (\pitwo, 0), (\pitwo, \pitwo) \}$ and the point of maximal violation $(a, b) = ( \pifour, b_{\alpha}^{*} )$. We choose the constant $s_{\alpha}$ so that the smallest eigenvalue of the operator $K - s_{\alpha} W$ occurs at multiple points $(a,b)$. In the case of CHSH, i.e.~for $\alpha = 0$, we can obtain the same smallest eigenvalue on all the vertices and the point of maximal violation. However, the case of $\alpha > 0$ is less symmetric and the optimal choice of $s_{\alpha}$ only equalises the smallest eigenvalue at two vertices and the point of maximal violation. Since the operators corresponding to the 5 special points (the vertices and the point of maximal violation) are easy to analyse (the operators $K$ and $W$ are diagonal in the same basis), our choice of $s_{\alpha}$ and $\mu_{\alpha}$ is given by analytic expressions. One can then check that the resulting operator inequality holds at these points for the entire range of $\alpharan$. Unfortunately, verifying the operator inequality on the rest of the square turns out to be much harder and we were not able to do it analytically. However, since the parameter space is bounded ($\alpharan$, $a, b \in [0, \pitwo]$), one can generate a grid over this space and check the operator inequality at those points numerically. We have found that the operator inequality holds up to numerical accuracy (see Appendix~\ref{app:robust-two-qubits} for details), which lends support to the following conjecture.
\begin{conj}
\label{conj:tilted-chsh}
Let $\alpharan$ and let $\inputstate$ be a bipartite quantum state which achieves the tilted CHSH violation of $\beta_{\alpha} := \tr ( W_{\alpha} \inputstate )$, where $W_{\alpha}$ is the Bell operator defined in Eq.~\eqref{eq:tilted-operator}. Then, the extractability of $\targetstate^{\alpha}$ from $\inputstate$ satisfies
\begin{equation*}
\Xi( \inputstate \rightarrow \targetstate^{\alpha} ) \geq s_{\alpha} \cdot \beta_{\alpha} + \mu_{\alpha}
\end{equation*}
for
\params
\end{conj}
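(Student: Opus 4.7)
The plan is to apply the STOPI framework of Section~\ref{sec:stopi} directly. By Jordan's lemma together with the covariance and direct-sum properties of the proposed extraction channels (Section~\ref{sec:extracting-qubit}), the global operator inequality $K \geq s_{\alpha} W + \mu_{\alpha} \unit$ reduces to the family of $4 \times 4$ matrix inequalities
\begin{equation*}
M(a,b,\alpha) := K(a,b) - s_{\alpha} W(a,b) - \mu_{\alpha}\unit \geq 0,
\end{equation*}
to be proved over $(a,b) \in [0, \pitwo]^{2}$ and $\alpharan$. First I would make $K(a,b)$ and $W(a,b)$ explicit: $W(a,b)$ is read off directly from Eqs.~\eqref{eq:tilted-operator}, \eqref{eq:A0-def} and \eqref{eq:A1-def} plus the analogous parameterisation of $B_{0}, B_{1}$ at angle $b$, while for the extraction channels I would follow the construction sketched in the excerpt, taking Alice's channel to interpolate between full $\Z$-dephasing at $a=0$, the identity at $a=\pifour$ and full $\X$-dephasing at $a=\pitwo$, and taking Bob's to be the same construction applied to an \emph{effective angle} $b'(b)$ piecewise-linear in $b$ with $b'(0)=0$, $b'(b_{\alpha}^{*})=\pifour$ and $b'(\pitwo)=\pitwo$, so that the identity channel aligns with his optimal angle $b_{\alpha}^{*}$.

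Next I would fix $s_{\alpha}$ and $\mu_{\alpha}$ by analysing the five distinguished points
\begin{equation*}
(a,b) \in \{(0,0),\ (0,\pitwo),\ (\pitwo,0),\ (\pitwo,\pitwo),\ (\pifour, b_{\alpha}^{*})\}.
\end{equation*}
At each of these, both $K$ and $W$ are simultaneously diagonalisable (each local channel is either identity or full dephasing), so $\lmin(K - s_{\alpha} W)$ reduces to the minimum over four explicit eigenvalue expressions. Demanding that the minima at the maximum-violation point and at two vertices coincide produces a small linear system for $(s_{\alpha}, \mu_{\alpha})$; checking that its solution matches the formulas of the conjecture and that the remaining three special points yield eigenvalues $\geq \mu_{\alpha}$ throughout $\alpharan$ is a routine but tedious calculation.

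The hard step is to extend the inequality $M(a,b,\alpha) \geq 0$ from these five points to the full square, uniformly in $\alpha$. In descending order of optimism, my plan would be: (i) exploit any residual discrete symmetry of $M$, for instance the one induced by $A_{0} \leftrightarrow A_{1}$ (i.e.~$a \mapsto -a$) and its counterpart on Bob's side, to block-diagonalise $M$ into two $2 \times 2$ blocks, after which positivity reduces to the simultaneous non-negativity of two trace-determinant pairs; (ii) if no such block structure survives in the full range, write the coefficients of the characteristic polynomial of $M$ as trigonometric polynomials in $\sin a, \cos a, \sin b', \cos b'$ and $\alpha$ and search for a sum-of-squares certificate using $\sin^{2}+\cos^{2}=1$ and $0 \leq \alpha < 2$ as polynomial constraints; (iii) failing a global certificate, cover $[0, \pitwo]^{2}$ by finitely many rectangles centred on the critical points and prove local quadratic lower bounds via the Hessian of $\lmin(M)$, patched together using monotonicity arguments on the boundaries.

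The principal obstacle is exactly this last task, which is what the authors themselves could only verify on a numerical grid. A contributing difficulty is that $M$ depends on $b$ only through the piecewise-linear effective angle $b'(b)$, so its derivatives are discontinuous at $b = b_{\alpha}^{*}$, ruling out naive smooth-optimisation arguments and forcing any rectangle cover to treat the two sub-intervals $[0, b_{\alpha}^{*}]$ and $[b_{\alpha}^{*}, \pitwo]$ separately. A secondary subtlety is the degenerate limit $\alpha \to 2$, where $\beta_{Q} - \beta_{C} \to 0$ and the constants behave singularly; here I would rescale $(2-\alpha)$ and verify the leading-order behaviour of $\lmin M$ by a separate continuity argument. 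I suspect that any complete proof will require an SOS certificate guided by numerical eigenvectors of $M$ at near-critical configurations, or else a non-obvious change of variables that renders one of the $2 \times 2$ blocks from step (i) manifestly positive.
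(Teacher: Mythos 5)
This statement is a \emph{conjecture}: the paper itself contains no analytic proof, only a numerical verification of the reduced $4\times 4$ operator inequality on a grid over $(\alpha, a, b)$ (Appendix~\ref{app:robust-two-qubits}). Your outline follows the paper's route essentially step for step --- reduction via Jordan's lemma and covariant, direct-sum-preserving channels to a $4\times4$ inequality on $[0,\pitwo]^2$, a one-parameter dephasing family for Alice with Bob's copy reparametrised by a piecewise-linear effective angle pinned at $b_{\alpha}^{*}$, fixing $(s_{\alpha},\mu_{\alpha})$ by equalising $\lmin(K-s_{\alpha}W)$ at the maximal-violation point and two vertices, and then a global verification --- and it correctly isolates that global verification as the open step. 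But your strategies (i)--(iii) for that step are proposals, not arguments: none is carried out, no SOS certificate or rectangle cover is exhibited, so the proposal leaves the conjecture exactly as open as the paper does. One remark on (i): the residual symmetry that actually survives the tilting is $T_{\alpha}(a,b) = U\,T_{\alpha}(\pitwo - a,\, b)\,U^{\dagger}$ with $U = \frac{\X+\Z}{\q}\otimes\X$, which halves the domain to $a\in[0,\pifour]$; the $A_{0}\leftrightarrow A_{1}$ reflection you suggest does not map the square into itself and does not yield the $2\times2$ block structure you hope for.

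There is also one concrete error in your construction. With the parametrisation of Eqs.~\eqref{eq:A0-def}--\eqref{eq:A1-def}, at $a=0$ both of Alice's observables equal $\X$ and at $a=\pitwo$ they equal $\pm\Z$; the extraction channel must therefore dephase in $\X$ at $a=0$ and in $\Z$ at $a=\pitwo$ (retaining precisely the Bloch component that the measurements probe), which is the choice made in the paper via $\Gamma(x)=\X$ on $[0,\pifour]$ and $\Gamma(x)=\Z$ on $(\pifour,\pitwo]$. You have the two bases swapped. With your assignment the channel at a vertex erases exactly the component of the target state correlated with the observables there, so the eigenvalue computation at the vertices --- the very computation you intend to use to pin down $s_{\alpha}$ and $\mu_{\alpha}$ --- would not reproduce the stated constants, and the operator inequality would fail already at the corner points. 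This is easily fixed, but as written the construction is not the one whose positivity the paper's numerics support.
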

In Fig.~\ref{fig:tilted-chsh} we compare the conjectured bounds with the results obtained by Bancal et al.~using the swap method~\cite{bancal15a}.\footnote{The formulation used in the swap method involves isometries rather than channels, but the auxiliary registers are traced out before computing fidelity with the target state (see Eqs.~(10) and (11) in Ref.~\cite{bancal15a}). Therefore, in both cases we obtain lower bounds on precisely the same quantity (see Appendix~\ref{app:formulation} for more details).}

Note that if we trust the numerical package used to verify the operator inequality, this conjecture could be made into a rigorous bound by explicitly calculating the error term. The error term would consist of two components: the error observed numerically on the grid (for our grid this value is of the order of $10^{-9}$) and the discretisation error. Unfortunately, since both $s_{\alpha}$ and $\mu_{\alpha}$ diverge as $\alpha \to 2$, the discretisation error would necessarily diverge in this limit. Therefore, no finite grid enables us to obtain certified bounds for $\alpha$ arbitrarily close to $2$.
\begin{figure}[h]
\includegraphics[width=1.00\columnwidth]{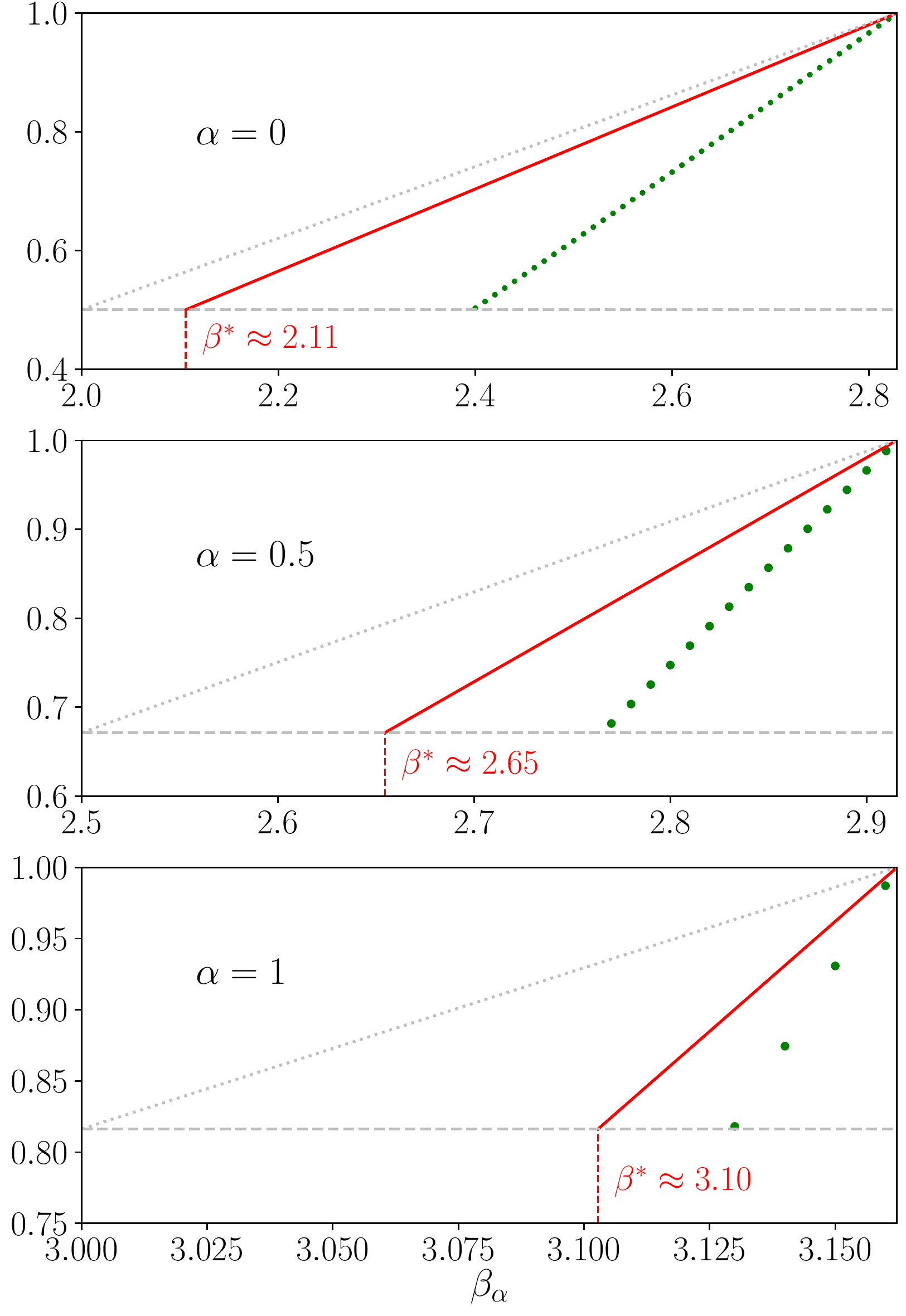}
\caption{Comparison of the conjectured lower bounds (solid line) with the previous results of Bancal et al.~(green points) \cite{bancal15a}. The range of $\beta_{\alpha}$ is chosen to cover the entire range between the classical and the quantum values. The dashed horizontal line indicates the trivial lower bound, whereas the dotted line corresponds to the upper bound given in Eq.~\eqref{eq:upper-bound}. Vertical dashed lines mark the threshold violation $\beta^{*}$ above which the statement becomes non-trivial. The case of $\alpha = 0$ corresponds to the self-testing bound for the CHSH inequality derived in Ref.~\cite{kaniewski16b}.}
\label{fig:tilted-chsh}
\end{figure}
\section{Nontrivial threshold violation for the CHSH inequality}
\label{sec:nontrivial-threshold}
In Ref.~\cite{kaniewski16b} the {\meth} method was used to derive robust bounds on self-testing the singlet\footnote{As explained in Section~\ref{sec:formulation} in the context of self-testing it is only the Schmidt coefficients that matter, so we use the term singlet to mean any (fixed) maximally entangled state of two qubits.} using the CHSH inequality. The resulting statement is nontrivial for any violation exceeding the threshold value of $\betaCHSH := ( 16 + 14 \sqrt{2} )/17 \approx 2.11$ (recall that for the CHSH inequality we have $\beta_{C} = 2$ and $\beta_{Q} = 2 \sqrt{2}$). We have tried to improve on this result, but we have not succeeded. In fact, the dephasing channels specified in the original paper seem to be by far the best choice.

This phenomenon made us wonder whether the existence of a threshold is an inherent feature of quantum mechanics, independent of the proof technique. In other words, maybe one can only make a self-testing statement for sufficiently large violations? The example below shows that this is indeed the case. More specifically, we have constructed a bipartite state which violates the CHSH inequality, but whose singlet extractability does not exceed the separable threshold of $\frac{1}{2}$. In this section we explain the construction of the state, briefly outline the idea of the proof and discuss the implications of this result, while the technical details can be found in Appendix~\ref{app:counterexample}.

Suppose that the system of Alice (Bob) consists of two subsystems: a three-dimensional classical register denoted by $X$ ($Y$) and a qubit denoted by $A$ ($B$). Consider the joint state
\begin{equation*}
\defcq{\rhoXYAB}{p_{xy}}{\rho_{AB}},
\end{equation*}
where $\{p_{xy}\}$ is a normalised probability distribution over $x, y \in \{0, 1, 2\}$ and $\inputstate^{xy}$ are some normalised two-qubit states to be specified later. The observables of Alice are given by
\begin{equation}
\label{eq:observables}
\begin{aligned}
A_{0} &= \ketbraq{0}_{X} \otimes \Z_{A} + \ketbraq{1}_{X} \otimes \Z_{A} + \ketbraq{2}_{X} \otimes \Z_{A},\\
A_{1} &= \ketbraq{0}_{X} \otimes \Z_{A} + \ketbraq{1}_{X} \otimes \X_{A} + \ketbraq{2}_{X} \otimes (-\Z)_{A}.
\end{aligned}
\end{equation}
The observables of Bob are precisely the same, but act on subsystems $Y$ and $B$ instead of $X$ and $A$. Computing the CHSH operator\footnote{The CHSH operator is obtained by setting $\alpha = 0$ in Eq.~\eqref{eq:tilted-operator}.} gives
\begin{equation*}
\defcq{W}{}{W_{AB}},
\end{equation*}
where $W_{AB}^{xy}$ are the resulting two-qubit operators. Let us arrange the 9 possible combination of $(x, y)$ on a $3 \times 3$ grid, where one axis corresponds to $x$ and the other axis corresponds to $y$. We will refer to the point $(x, y) = (1, 1)$ as ``the centre'', while the remaining 8 points constitute ``the frame''. The centre allows for the optimal CHSH violation and we choose $\inputstate^{11}$ to be the corresponding eigenstate of $W_{AB}^{11}$, i.e.
\begin{equation*}
\tr ( W_{AB}^{11} \inputstate^{11} ) = 2 \sqrt{2}.
\end{equation*}
For all the points on the frame the two-qubit operator is a product operator whose eigenvalues are $\{-2, 2\}$. We choose the states $\inputstate^{xy}$ to be classically-correlated and satisfy
\begin{equation*}
\tr ( W_{AB}^{xy} \inputstate^{xy} ) = 2.
\end{equation*}
Clearly, this setup violates the CHSH inequality as long as $p_{11} > 0$.

Now we would like to show that there exists a probability distribution satisfying $p_{11} > 0$ such that the resulting state $\rhoXYAB$ has singlet extractability of $\frac{1}{2}$. In general this is a hard task, as we must show that this value cannot be exceeded regardless of the choice of the extraction channels. Fortunately, the presence of classical registers significantly simplifies the problem due the following observation: any quantum channel that acts simultaneously on classical and quantum registers can be simulated by first reading off the value of the classical register and then applying a particular quantum channel to the quantum register (for completeness we provide a proof, see Lemma~\ref{lem:classical-quantum-channel} in Appendix~\ref{app:counterexample}). This observation implies that instead of considering channels from $\cL( \amsbb{C}^{3} \otimes \amsbb{C}^{2} )$ to $\cL( \amsbb{C}^{2} )$, it suffices to consider triples (one corresponding to each value of the classical register) of qubit ($\cL( \amsbb{C}^{2} ) \to \cL( \amsbb{C}^{2} )$) channels.

All the states on the frame are classically-correlated, but the local bases are different for different points. In fact, one can show that the only strategy that achieves optimal extraction (i.e.~fidelity of $\frac{1}{2}$) on \emph{all} the frame points corresponds to erasing the initial state and replacing it with a fixed product state. This operation is achieved precisely by the full amplitude-damping channel. On the other hand, in order to preserve entanglement of the state in the centre, one should apply some non-destructive channels, e.g.~unitaries. These two requirements are highly incompatible and this incompatibility is precisely what our proof hinges on. We choose a probability distribution concentrated on the frame, which forces Alice and Bob to perform channels close to full amplitude damping and we show that such channels necessarily destroy the entanglement present in the centre. The proof, which consists of a long sequence of elementary inequalities, can be found in Appendix~\ref{app:counterexample}.
\begin{prop}
\label{prop:threshold}
There exists a bipartite state $\rhoXYAB$ which produces a CHSH violation of $\beta \approx \thrviol$, but nevertheless exhibits a singlet extractability of $\frac{1}{2}$.
\end{prop}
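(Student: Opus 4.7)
The plan is to build the counterexample explicitly and then use Propositions \ref{prop:ewav-eigenvalue} and \ref{prop:centre-bound} (together with the classical-quantum reduction, Lemma \ref{lem:classical-quantum-channel}) to show that the singlet extractability of the resulting state equals exactly $\tfrac{1}{2}$. For the state I would take $\rho_{AB}^{11}$ to be the maximally entangled eigenvector of $W_{AB}^{11}$ with eigenvalue $2\sqrt{2}$, and for the frame points $(0,0), (0,1), (0,2), (1,0), (2,0), (2,2)$ I would use exactly the separable states listed in Proposition \ref{prop:ewav-eigenvalue}; on the remaining two frame points $(1,2)$ and $(2,1)$ any separable eigenstate of $W_{AB}^{xy}$ with eigenvalue $2$ works, since I will eventually assign them weight zero. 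A short calculation substituting the observables from Eq.~\eqref{eq:observables} confirms that each $\rho_{AB}^{xy}$ is an eigenvector of $W_{AB}^{xy}$ with eigenvalue $2$ on the frame and $2\sqrt{2}$ at the centre, so the CHSH violation will equal $\beta = 2 + 2(\sqrt{2}-1)\, p_{11}$.

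To upper-bound the extractability I would apply Lemma \ref{lem:classical-quantum-channel} to replace the general extraction channels $\Lambda_{A}, \Lambda_{B}$ acting on $\cL(\amsbb{C}^{3} \otimes \amsbb{C}^{2})$ by triples of qubit channels $\{\Lambda_{A}^{x}\}_{x=0}^{2}$ and $\{\Lambda_{B}^{y}\}_{y=0}^{2}$. Linearity of the inner product then gives
\begin{equation*}
\ave[\big]{(\Lambda_{A} \otimes \Lambda_{B})(\rhoXYAB), \targetstate^{+}} = \sum_{xy} p_{xy} \ave[\big]{(\Lambda_{A}^{x} \otimes \Lambda_{B}^{y})(\rho_{AB}^{xy}), \targetstate^{+}}.
\end{equation*}
For each frame $(x,y)$ the state $\rho_{AB}^{xy}$ is separable, so the inner product is exactly $\tfrac{1}{2} - \varepsilon_{xy}$ with $\varepsilon_{xy} \geq 0$ as in Eq.~\eqref{eq:epsilonxy}. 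For the centre I would invoke Proposition \ref{prop:centre-bound} with $\Psi_{1} = \rho_{AB}^{11}$ and $\Psi_{2} = \targetstate^{+}$, which bounds the centre inner product by $\tfrac{1}{2} + 2\lambda$ for any common upper bound $\lambda$ on $\lmin(\omega_{A})$ and $\lmin(\omega_{B})$; by Proposition \ref{prop:ewav-eigenvalue} I may take $\lambda = 248\, \ewav$.

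Putting these pieces together,
\begin{equation*}
\Xi(\rhoXYAB \to \targetstate^{+}) \leq \tfrac{1}{2} - \sum_{(x,y) \neq (1,1)} p_{xy}\, \varepsilon_{xy} + 496\, p_{11}\, \ewav,
\end{equation*}
which after expanding $\ewav$ becomes a linear function of the $\varepsilon_{xy}$. I would then pick the probability distribution $\{p_{xy}\}$ so that each coefficient of $\varepsilon_{xy}$ is non-positive: this translates to explicit inequalities of the form $p_{xy} \geq C_{xy}\, p_{11}$ with $C_{00} = 64$, $C_{01} = C_{10} = 24$, $C_{02} = C_{20} = C_{22} = 128$, together with $p_{12}, p_{21} \geq 0$. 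Saturating these constraints and the normalisation $\sum_{xy} p_{xy} = 1$ forces $p_{11}$ to a small but strictly positive value, and the corresponding $\beta = 2 + 2(\sqrt{2}-1) p_{11}$ is still strictly above the classical bound $2$. The matching lower bound $\Xi \geq \tfrac{1}{2}$ is automatic, since $\tfrac{1}{2}$ is the largest Schmidt coefficient squared of the target state.

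The main obstacle is proving Proposition \ref{prop:ewav-eigenvalue}: once it is available, the remaining steps are essentially bookkeeping. The difficulty is that one must translate six scalar inequalities on carefully chosen separable probe states into a spectral bound on the single qubit density matrix $\Lambda^{1}(\unit_{2}/2)$, which requires tracing through how the Choi representations of $\Lambda^{0}, \Lambda^{1}, \Lambda^{2}$ are constrained by the particular forms of the probes and then recombining those constraints to bound $\lmin(\omega_{A})$ and $\lmin(\omega_{B})$. Everything else, including Lemma \ref{lem:classical-quantum-channel}, the application of Proposition \ref{prop:centre-bound}, and the linear-programming-style choice of $\{p_{xy}\}$, is straightforward in comparison.
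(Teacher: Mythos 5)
Your proposal is correct and follows essentially the same route as the paper: the same frame/centre construction, the same reduction via Lemma~\ref{lem:classical-quantum-channel}, and the same combination of Propositions~\ref{prop:ewav-eigenvalue} and~\ref{prop:centre-bound} followed by a linear choice of $\{p_{xy}\}$ that kills every $\varepsilon_{xy}$ coefficient. The only (immaterial) difference is bookkeeping: you carry the tight constant $2\cdot 248 = 496$ and land on $p_{11}=1/497$, whereas the paper uses the slacker bound $\tfrac{1}{2}+596\,\ewav$ and hence $p_{11}=1/597$, which is why its quoted violation is $\beta\approx\thrviol$ rather than your slightly larger value.
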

This result can be interpreted in several ways. First of all, it implies that self-testing of the singlet using the CHSH inequality is only possible above some threshold. We find this insight rather surprising, since it shows that self-testing scenarios can be split up into two classes depending on whether they exhibit a threshold (like the CHSH inequality) or not (like the Mermin inequality~\cite{kaniewski16b}). Intuitively, one would conjecture that the presence of a threshold is generic and only in some special circumstances can we make self-testing statements arbitrarily close to the classical value $\beta_{C}$. Note that the Mermin inequality is \emph{frustration-free} in the sense that the optimal quantum realisation simultaneously saturates every term of the Bell operator (contrary to the CHSH inequality). We conjecture that frustration-freeness is the source of strong self-testing properties.

We do not know what the exact threshold for the CHSH inequality is, but it must lie in between $\thrviol$ and $\betaCHSH \approx 2.11$. The analysis we perform could certainly be tightened to improve the lower limit of this interval, but one cannot hope for a significant improvement using our method.

It is important to realise that our result crucially relies on choosing the extractability as the quantity relevant for the task of self-testing and one can ask whether the same threshold phenomenon appears if we replace the fidelity with some other distance measure such as the trace distance. While we do not have a definite answer to this question, we would like to point out that extractability is \emph{the only quantity} for which robust self-testing statements have been proven, i.e.~it seems to be the most ``forgiving'' one. We therefore conjecture that if a threshold occurs for the extractability, it will also appear for any other quantity that accurately captures the task of self-testing (although the actual threshold values will, of course, be different).

We have shown that from the extractability point of view the state $\rhoXYAB$ is as uninteresting as any separable state, but it is clear that the entanglement becomes accessible when more general transformations are allowed. If we allow for non-deterministic entanglement extraction (Alice and Bob apply a local extraction map which either succeeds or fails and we only care about the performance if they both succeed), all the entanglement can be extracted. In a similar fashion the entanglement becomes accessible if we allow classical communication between Alice and Bob, i.e.~we perform entanglement distillation. One could therefore ask whether a stronger counterexample could be found, in which we find a state which is not only non-extractable but also non-distillable. Such a counterexample is, however, not possible, because every state that violates the CHSH inequality is necessarily distillable~\cite{masanes06a}.

At first glance our result seems related to the celebrated conjecture of Peres stating that undistillable states do not violate Bell inequalities~\cite{peres99a} (recently disproved by V{\'e}rtesi and Brunner~\cite{vertesi14a}), but this similarity is rather superficial. Distillability is a fundamental property of entanglement and does not require any particular reference state. Singlet extractability, on the other hand, is defined with respect to a specific target state and is tailored specifically to the task of self-testing.
\section{Conclusions and open questions}
\label{sec:conclusions}
In this work we have focused on the problem of self-testing in the channel formulation as proposed by Bardyn et al.~\cite{bardyn09a}. We have discussed the recently proposed {\meth} method and applied it to the tilted CHSH inequality. Moreover, we have shown that self-testing using the CHSH inequality is only possible above some threshold, which implies the existence of two fundamentally different classes of self-testing scenarios.

Let us conclude by presenting a couple of directions for future research. The first natural extension would be to look at scenarios with more than two parties, but still only \changes{two inputs and outputs} per party. The family of Mermin-Ardehali-Belinskii-Klyshko inequalities~\cite{mermin90a, ardehali92a, belinskii93a} is a promising candidate because it is permutation-symmetric and the optimal observables are precisely the same as for the CHSH and Mermin inequalities. We therefore expect that applying the same channels could already give satisfactory results. A more challenging goal is to apply the {\meth} method to scenarios going beyond Jordan's lemma, i.e.~where the number of inputs or outputs is higher than $2$. As this is not an easy task, it might be more tractable in a more restrictive setup, e.g.~in a semi-device-independent scenario where one of the parties is trusted (equivalent to steering~\cite{gheorghiu17a, supic16b, goswami18a}). The {\meth} method has been successfully applied to prepare-and-measure scenarios in which the transmitted system is a qubit~\cite{tavakoli18a} and one might also try to apply it to higher-dimensional cases (although one should remember that they are self-tests in a weaker sense~\cite{farkas18a}).

Another important concept that arises from this work is the threshold violation. We have shown that the CHSH inequality exhibits a threshold violation, but we have not pinned down the number. Computing the exact number is likely to be hard and, moreover, the actual value might depend on the specific formulation of self-testing, which makes it less interesting from a fundamental point of view. However, we see the sheer existence of a threshold as something that deserves a better understanding. We would first like to know whether there exists an alternative natural formulation of the self-testing problem for which the threshold does not appear. If that is not the case, it would be interesting to find out which features of the Bell inequality determine whether it exhibits a threshold or not and which of the two behaviours is generic. We would also like to have an example of a bipartite inequality without a threshold.

Let us finish by pointing out that while the current formulation of self-testing works well in some scenarios, there is some recent evidence that the problem of deducing properties of quantum systems from statistics alone is generically much harder, particularly in multipartite scenarios~\cite{goh18a}. This evidence motivates more relaxed formulations of the problem, where instead of pinning down the exact state, we are happy to obtain a lower bound on some entanglement measure~\cite{bancal11a, pal11a, moroder13a, arnonfriedman17a, arnonfriedman19a}.
\section*{Acknowledgements}
We thank Jean-Daniel Bancal for sharing numerical data with us. TC would like to thank the QMATH Centre and the Erasmus+ Programme for the financial support of his research stay. JK acknowledges support from the National Science Centre, Poland (grant no.~2016/23/P/ST2/02122). This project is carried out under POLONEZ programme which has received funding from the European Union's Horizon 2020 research and innovation programme under the Marie Sk{\l}odowska-Curie grant agreement no.~665778. CS is supported by a NWO VIDI grant (Project No.~639.022.519).
\bibliography{library}
\appendix
\onecolumngrid
\section{Formulations of the self-testing problem}
\label{app:formulation}
In this appendix we discuss possible formulations of the self-testing problem. In the first part we show that the three commonly used formulations are equivalent. In the second part we explain how to make these formulations robust and discuss the relations between the resulting inequivalent measures for robust self-testing.
\subsection{Exact self-testing definitions}
A linear map $V : \sH_{A} \to \sH_{B}$ is called an isometry if it satisfies $V^{\dagger} V = \unit_{A}$. For a Hilbert space $\sH$ let $\cS(\sH)$ be the set of density operators acting on $\sH$. Let $\sH_{X}$ and $\sH_{X'}$ for $X \in \{ A, B \}$ be finite-dimensional Hilbert spaces. The target state $\targetstate \in \cS( \sH_{A'} \otimes \sH_{B'} )$ is pure ($\targetstate^{2} = \targetstate$) and its marginals ($\Phi_{A'} := \tr_{B'} \Phi_{A'B'}$ and $\Phi_{B'} := \tr_{A'} \Phi_{A'B'}$) are full-rank ($\rk(\Phi_{X'}) = \dim( \sH_{X'} )$ for $X \in \{ A, B \}$, which immediately implies $\dim( \sH_{A'} ) = \dim( \sH_{B'} )$). The input state $\inputstate \in \cS( \sH_{A} \otimes \sH_{B} )$ is arbitrary.
\begin{prop}
The following three statements are equivalent.
\begin{enumerate}
\item[(1)] There exist completely positive trace-preserving maps $\Lambda_{X} : \cL( \sH_{X} ) \to \cL( \sH_{X'} )$ such that
\begin{equation}
\label{eq:statement-1}
( \Lambda_{A} \otimes \Lambda_{B} ) ( \inputstate ) = \targetstate.
\end{equation}
\item[(2)] There exist Hilbert spaces $\sH_{X''}$, isometries $V_{X} : \sH_{X} \to \sH_{X'} \otimes \sH_{X''}$ and an auxiliary state $\auxstate \in \cS( \sH_{A''} \otimes \sH_{B''} )$ such that
\begin{equation}
\label{eq:statement-2}
V \inputstate V^{\dagger} = \targetstate \otimes \auxstate,
\end{equation}
where $V = V_{A} \otimes V_{B}$ is the combined isometry.
\item[(3)] There exist Hilbert spaces $\sH_{X'''}$, isometries $W_{X} : \sH_{X'} \otimes \sH_{X'''} \to \sH_{X}$ and an auxiliary state $\tau_{A'''B'''} \in \cS( \sH_{A'''} \otimes \sH_{B'''} )$ such that
\begin{equation}
\label{eq:statement-3}
\inputstate = W ( \targetstate \otimes \tau_{A'''B'''} ) W^{\dagger},
\end{equation}
where $W = W_{A} \otimes W_{B}$ is the combined isometry.
\end{enumerate}
\end{prop}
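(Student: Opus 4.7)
The plan is to establish the cycle (1) $\Rightarrow$ (2) $\Rightarrow$ (3) $\Rightarrow$ (1). The central technical tool is a \emph{factorization lemma}: any state $\eta \in \cS(\sH_{A'A''B'B''})$ whose $A'B'$-marginal equals the pure state $\targetstate$ must factor as $\targetstate \otimes \sigma_{A''B''}$ for some $\sigma_{A''B''}$. I would prove this by spectrally decomposing $\eta = \sum_k q_k \ketbraq{\eta_k}$: the reduced states $\tr_{A''B''}\ketbraq{\eta_k}$ are density operators that form a convex combination equal to the rank-one $\targetstate$, so each must itself equal $\targetstate$; since a pure state whose marginal is pure is already a product, this forces $\ket{\eta_k} = \ket{\Phi}_{A'B'} \otimes \ket{\phi_k}_{A''B''}$, giving the factorization with $\sigma_{A''B''} = \sum_k q_k \ketbraq{\phi_k}$.

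The equivalence (1) $\Leftrightarrow$ (2) is then routine. For (1) $\Rightarrow$ (2), I would take Stinespring dilations $V_X : \sH_X \to \sH_{X'} \otimes \sH_{X''}$ of $\Lambda_X$ and apply the lemma to $V \inputstate V^\dagger$, whose $A'B'$-marginal is $\targetstate$ by hypothesis. For (2) $\Rightarrow$ (1), set $\Lambda_X(\cdot) := \tr_{X''}[V_X \cdot V_X^\dagger]$. The delicate step is (2) $\Rightarrow$ (3): I would take $\sH_{X'''}$ to be the support of the $X''$-marginal of $\auxstate$ and set $W_X := V_X^\dagger$ restricted to $\sH_{X'} \otimes \sH_{X'''}$. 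The main content is the inclusion $\sH_{X'} \otimes \sH_{X'''} \subseteq V_X(\sH_X)$: for each pure component $\ket{\phi_k}$ of $\auxstate$, the vector $\ket{\Phi}\otimes\ket{\phi_k}$ lies in the product range $V_A(\sH_A) \otimes V_B(\sH_B)$, so its $A$-marginal $\Phi_{A'} \otimes \phi_{k,A''}$, which is supported on $\sH_{A'} \otimes \operatorname{supp}(\phi_{k,A''})$ thanks to the full-rank assumption on $\Phi_{A'}$, must sit inside $V_A(\sH_A)$; taking spans over $k$ gives the $A$-side inclusion and symmetry handles $B$. Since $V_X^\dagger$ acts as the left inverse of $V_X$ on its range, the resulting $W_X$ is an isometry into $\sH_X$, and with $\tau_{A'''B'''} := \auxstate$ (legitimate because $\operatorname{supp}(\auxstate) \subseteq \sH_{A'''} \otimes \sH_{B'''}$) one verifies $W(\targetstate \otimes \tau_{A'''B'''}) W^\dagger = V^\dagger(\targetstate \otimes \auxstate) V = V^\dagger V \inputstate V^\dagger V = \inputstate$. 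For (3) $\Rightarrow$ (1) I would define $\Lambda_X(\cdot) := \tr_{X'''}[W_X^\dagger \cdot W_X] + \tr[(\unit - W_X W_X^\dagger)\cdot]\,\omega_{X'}$ for any fixed state $\omega_{X'}$; this is CPTP, the completion term vanishes on $\inputstate_X$ (whose support lies in the range of $W_X$), and the first term reduces to $\targetstate$ via $W^\dagger W = \unit$.

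I expect the main obstacle to be the (2) $\Rightarrow$ (3) direction, specifically establishing the inclusion $\sH_{X'} \otimes \sH_{X'''} \subseteq V_X(\sH_X)$: this is the only step where the backward-direction isometries must be synthesised rather than taken as input, and the argument genuinely relies on both the product structure $V = V_A \otimes V_B$ and the full-rank marginal hypothesis on $\targetstate$. Without the latter, the range of $V_X$ need not contain any subspace of the required product form, and the implication would break down.
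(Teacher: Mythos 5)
Your proposal is correct. The core ingredients coincide with the paper's: the implication $(1) \Rightarrow (2)$ via dilation plus the observation that a state with a pure marginal on $A'B'$ must factorise (the paper asserts this in one line; you prove it via the spectral decomposition, which is a welcome addition), and the implication $(2) \Rightarrow (3)$ by inverting $V_X$ on the subspace $\sH_{X'} \otimes \operatorname{supp}(\sigma_{X''})$. Your $W_{X}$ is the same map as the paper's $W_{X} := \Pi_{X} V_{X}^{\dagger} ( \unit_{X'} \otimes T_{X} )$, up to identifying the abstract space $\sH_{X'''}$ with a subspace of $\sH_{X''}$; you justify that it is an isometry via the range inclusion $\sH_{X'} \otimes \sH_{X'''} \subseteq V_{X}(\sH_{X})$, the paper via the support equality $V_{X} \Pi_{X} V_{X}^{\dagger} = \unit_{X'} \otimes \Pi_{X''}$ --- two formulations of the same fact, both resting on the full-rank hypothesis exactly as you flag. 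Where you genuinely diverge is in closing the equivalence: the paper proves $(3) \Rightarrow (2)$ by extending $W_{X}^{\dagger}$ to an isometry $V_{X}$, which requires introducing an auxiliary isometry $R_{X}$ on the orthocomplement of $\operatorname{supp}(\rho_{X})$ and a dimension-counting argument to ensure $\sH_{X''}$ is large enough. You instead prove $(3) \Rightarrow (1)$ directly with the CPTP map $\Lambda_{X}(\cdot) = \tr_{X'''}[ W_{X}^{\dagger} \cdot W_{X} ] + \tr[ ( \unit - W_{X} W_{X}^{\dagger} ) \cdot ] \, \omega_{X'}$, whose completion term annihilates $\inputstate$ because $(\unit - W_{X} W_{X}^{\dagger}) W_{X} = 0$. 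Your cyclic route is logically sufficient and avoids the extension construction entirely; the price is that you do not exhibit the $(3) \Rightarrow (2)$ isometry explicitly, which the paper's pairwise structure does provide.
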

Before proceeding to the proof, let us sketch how the three formulations are connected. The equivalence between $(1)$ and $(2)$ is a direct consequence of Naimark's dilation theorem. The relation between $(2)$ and $(3)$, on the other hand, is more subtle and deserves a brief discussion. If the isometry $V$ in Eq.~\eqref{eq:statement-2} happens to be a unitary, we can just move it to the other side to obtain Eq.~\eqref{eq:statement-3} and the equivalence is trivial. However, if the dimensions do not match, i.e.~when $\dim(\sH_{X})$ is not a multiple of $\dim(\sH_{X'})$, the isometry $V_{X}$ cannot be a unitary and cannot be inverted. Then, the solution is to invert it only on the support of the state $\Phi_{X'} \otimes \sigma_{X''}$ and the construction proving $(2) \to (3)$ does precisely that. The proof of $(3) \to (2)$ proceeds analogously.
\begin{proof}
To see $(1) \to (2)$ we construct Naimark's dilation of the extraction channels. This gives us Hilbert spaces $\sH_{A''}, \sH_{B''}$ and local isometries $V_{A}, V_{B}$ such that
\begin{equation*}
V \inputstate V^{\dagger} = \eta_{A'B'A''B''}
\end{equation*}
and $\tr_{A''B''} \eta_{A'B'A''B''} = \targetstate$. Since the reduced state on $A'B'$ is pure, it must be uncorrelated from the state on $A''B''$, which concludes the proof. The opposite direction is easy: the extraction channel corresponds to applying the isometry and tracing out the auxiliary system.

To prove $(2) \to (3)$ we explicitly construct a new Hilbert space, isometries and an auxiliary state. Let us start by showing a simple implication of Eq.~\eqref{eq:statement-2}. Tracing out one of the systems gives
\begin{equation*}
V_{X} \rho_{X} V_{X}^{\dagger} = \Phi_{X'} \otimes \sigma_{X''}.
\end{equation*}
If two operators are equal, their supports must be equal too. Moreover, the support of a tensor product is the tensor product of the supports. Let $\Pi_{X}$ and $\Pi_{X''}$ be the projectors on the supports of $\rho_{X}$ and $\sigma_{X''}$, respectively. Since $\Phi_{X'}$ is full-rank, we obtain
\begin{equation}
\label{eq:support-equality}
V_{X} \Pi_{X} V_{X}^{\dagger} = \unit_{X'} \otimes \Pi_{X''}.
\end{equation}
We can now proceed to the construction. Consider a Hilbert space $\sH_{X'''}$ such that $\dim ( \sH_{X'''} ) = \tr( \Pi_{X''} )$ equipped with an isometry $T_{X} : \sH_{X'''} \to \sH_{X''}$ satisfying
\begin{equation}
\label{eq:Tx-image}
T_{X} T_{X}^{\dagger} = \Pi_{X''}.
\end{equation}
Define
\begin{equation}
\label{eq:tau-definition}
\tau_{A'''B'''} := ( T_{A}^{\dagger} \otimes T_{B}^{\dagger} ) \auxstate ( T_{A} \otimes T_{B} ).
\end{equation}
To see that $\tau_{A'''B'''}$ is a valid state we need to check that it is positive semidefinite and of unit trace. The first property is clear (if $A \geq 0$, then $X^{\dagger} A X \geq 0$ for any $X$), while for the second property we first observe that
\begin{equation*}
\tr( \tau_{A'''B'''} ) = \tr \big( ( \Pi_{A''} \otimes \Pi_{B''} ) \auxstate \big)
\end{equation*}
and then recall that projecting on the local supports does not affect the state, i.e.
\begin{equation*}
( \Pi_{A''} \otimes \Pi_{B''} ) \auxstate = \auxstate.
\end{equation*}
Define $W_{X} : \sH_{X'} \otimes \sH_{X'''} \to \sH_{X}$ as
\begin{equation*}
W_{X} := \Pi_{X} V_{X}^{\dagger} ( \unit_{X'} \otimes T_{X} ).
\end{equation*}
To see that $W_{X}$ is an isometry compute
\begin{align*}
W_{X}^{\dagger} W_{X} &= ( \unit_{X'} \otimes T_{X}^{\dagger} ) V_{X} \Pi_{X} V_{X}^{\dagger} ( \unit_{X'} \otimes T_{X} ) = ( \unit_{X'} \otimes T_{X}^{\dagger} ) ( \unit_{X'} \otimes \Pi_{X''} ) ( \unit_{X'} \otimes T_{X} )\\
&= \unit_{X'} \otimes ( T_{X}^{\dagger} \Pi_{X''} T_{X} ) = \unit_{X'} \otimes \unit_{X'''},
\end{align*}
where in the first line we have used Eq.~\eqref{eq:support-equality}, while the last step relies on Eq.~\eqref{eq:Tx-image}. Finally, we must verify that Eq.~\eqref{eq:statement-3} holds. Writing out the right-hand side gives
\begin{align*}
W \big( \targetstate \otimes \tau_{A'''B'''} \big) W^{\dagger} &= ( \Pi_{A} \otimes \Pi_{B} ) ( V_{A}^{\dagger} \otimes V_{B}^{\dagger} ) ( \unit_{A'B'} \otimes T_{A} \otimes T_{B} ) ( \targetstate \otimes \tau_{A'''B'''} )\\
& \quad \; ( \unit_{A'B'} \otimes T_{A}^{\dagger} \otimes T_{B}^{\dagger} ) ( V_{A} \otimes V_{B} ) ( \Pi_{A} \otimes \Pi_{B} )\\
&= ( \Pi_{A} \otimes \Pi_{B} ) ( V_{A}^{\dagger} \otimes V_{B}^{\dagger} ) [ \targetstate \otimes ( T_{A} \otimes T_{B} ) \tau_{A'''B'''} ( T_{A}^{\dagger} \otimes T_{B}^{\dagger} ) ] ( V_{A} \otimes V_{B} ) ( \Pi_{A} \otimes \Pi_{B} ).
\end{align*}
We simplify the middle term using Eq.~\eqref{eq:tau-definition}
\begin{equation*}
( T_{A} \otimes T_{B} ) \tau_{A'''B'''} ( T_{A}^{\dagger} \otimes T_{B}^{\dagger} ) = ( \Pi_{A''} \otimes \Pi_{B''} ) \auxstate ( \Pi_{A''} \otimes \Pi_{B''} ) = \auxstate.
\end{equation*}
Therefore,
\begin{align*}
W \big( \targetstate \otimes \tau_{A'''B'''} \big) W^{\dagger} &= ( \Pi_{A} \otimes \Pi_{B} ) ( V_{A}^{\dagger} \otimes V_{B}^{\dagger} ) ( \targetstate \otimes \auxstate ) ( V_{A} \otimes V_{B} ) ( \Pi_{A} \otimes \Pi_{B} )\\
&= ( \Pi_{A} \otimes \Pi_{B} ) \inputstate ( \Pi_{A} \otimes \Pi_{B} ) = \inputstate,
\end{align*}
where the middle step is a direct consequence of Eq.~\eqref{eq:statement-2}.

The proof of $(3) \to (2)$ is, again, a construction. Analogous to the previous argument we find that the projectors on the supports $\Pi_{X}$ and $\Pi_{X'''}$ satisfy
\begin{equation*}
\Pi_{X} = W_{X} ( \unit_{X'} \otimes \Pi_{X'''} ) W_{X}^{\dagger}.
\end{equation*}
Consider a Hilbert space $\sH_{X''}$ (dimension to be specified later) and a linear map $L_{X} : \sH_{X''} \to \sH_{X'''}$ satisfying
\begin{equation*}
L_{X} L_{X}^{\dagger} = \Pi_{X'''}.
\end{equation*}
Let
\begin{equation*}
\auxstate := ( L_{A}^{\dagger} \otimes L_{B}^{\dagger} ) \tau_{A'''B'''} ( L_{A} \otimes L_{B} )
\end{equation*}
and it is easy to check that $\auxstate$ is a valid state. Finally, we need an isometry $R_{X} : \sH_{X} \to \sH_{X'} \otimes \sH_{X''}$ such that the projectors $R_{X} ( \unit_{X} - \Pi_{X} ) R_{X}^{\dagger}$ and $\unit_{X'} \otimes L_{X}^{\dagger} L_{X}$ are orthogonal. Finding such an isometry is possible if the Hilbert space $\sH_{X''}$ is of sufficiently high dimension. A simple dimension counting argument implies that we must choose $\dim( \sH_{X''} )$ to satisfy $\dim( \sH_{X'} ) \cdot \dim( \sH_{X''} ) \geq \dim( \sH_{X} )$.
Define $V_{X} : \sH_{X} \to \sH_{X'} \otimes \sH_{X''}$ as
\begin{equation*}
V_{X} = ( \unit_{X'} \otimes L_{X}^{\dagger} ) W_{X}^{\dagger} + R_{X} ( \unit_{X} - \Pi_{X} ).
\end{equation*}
It is easy to verify that $V_{X}$ is an isometry and that the combined isometry $V := V_{A} \otimes V_{B}$ satisfies Eq.~\eqref{eq:statement-2}.
\end{proof}
\subsection{Robust self-testing measures}
\label{app:robust-measures}
The conditions discussed in the previous section capture the idea that a perfect copy of the target state can be extracted from the real state. If we want to use these quantities in any real-world situation, we need to introduce their approximate versions. In the ideal case we require the existence of some objects (e.g.~channels or isometries) which render equalities~\eqref{eq:statement-1}, \eqref{eq:statement-2}, \eqref{eq:statement-3} true. In the approximate case we will quantify approximate satisfaction of these equalities by computing the fidelity between the left- and right-hand sides and we will maximise this value over all valid objects. Note that instead of using the fidelity, we could use the trace norm as a measure of distance, but since we are not aware of any robust results involving the trace distance, we do not discuss it here.

The approximate satisfaction of condition~\eqref{eq:statement-1} is quantified by the extractability defined as
\begin{equation*}
\extractability,
\end{equation*}
where the maximisation is taken over all quantum channels from $A$ to $A'$ and $B$ to $B'$, respectively. Basic properties of extractability are discussed in Section~\ref{sec:formulation} of the main text.

Condition~\eqref{eq:statement-2} gives rise to a measure which we call \emph{isometric fidelity} defined as
\begin{equation}
\label{eq:isometric-fidelity-definition}
\Fiso ( \inputstate \to \targetstate ) := \sup_{\auxstate} \sup_{V} F \big( V\inputstate V^{\dagger} , \targetstate \otimes \auxstate \big),
\end{equation}
where the supremum is taken over product isometries $V = V_{A} \otimes V_{B}$, where $V_{X} : \sH_{X} \to \sH_{X'} \otimes \sH_{X''}$, and auxiliary states $\auxstate \in \cS( \sH_{A''} \otimes \sH_{B''} )$. Perhaps surprisingly, this quantity turns out to be equal to the extractability as long as the target state is pure \cite{ribeirodahlberg18}.
\begin{prop}
\label{prop:extractability-equals-isometric}
Let $\inputstate$ be an arbitrary input state and $\targetstate$ be an arbitrary pure target state. Then,
\begin{equation*}
\Xi( \inputstate \rightarrow \targetstate ) = \Fiso ( \inputstate \to \targetstate ).
\end{equation*}
\end{prop}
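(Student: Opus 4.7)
My plan is to prove $\Xi(\inputstate\to\targetstate) = \Fiso(\inputstate\to\targetstate)$ by establishing the two inequalities separately, each exploiting a different idea. The direction $\Xi \geq \Fiso$ does not even require purity of $\targetstate$ and follows directly from monotonicity of fidelity under partial trace. Given any product isometry $V = V_A \otimes V_B$ with $V_X : \sH_X \to \sH_{X'}\otimes\sH_{X''}$ and any auxiliary state $\auxstate$, I would define quantum channels $\Lambda_X(\cdot) := \tr_{X''}[V_X(\cdot)V_X^\dagger]$. Since $\tr_{A''B''}(\targetstate\otimes\auxstate) = \targetstate$ and partial trace is a quantum channel, monotonicity of fidelity yields
\begin{equation*}
F\big((\Lambda_A\otimes\Lambda_B)(\inputstate),\targetstate\big) = F\big(\tr_{A''B''}[V\inputstate V^\dagger], \tr_{A''B''}[\targetstate\otimes\auxstate]\big) \geq F\big(V\inputstate V^\dagger, \targetstate\otimes\auxstate\big).
\end{equation*}
Maximising the right-hand side over $V, \auxstate$ and the left-hand side over $\Lambda_A,\Lambda_B$ gives $\Xi \geq \Fiso$.

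For the reverse direction $\Fiso \geq \Xi$ I would start from extraction channels $\Lambda_A,\Lambda_B$ and apply Stinespring's theorem to each factor individually, producing local isometries $V_X : \sH_X \to \sH_{X'}\otimes\sH_{X''}$ such that $\Lambda_X(\rho_X) = \tr_{X''}[V_X\rho_X V_X^\dagger]$, so that $V := V_A\otimes V_B$ is automatically a valid product isometry for the $\Fiso$ definition. Let $\ket\Phi \in \sH_{A'B'}$ be the unit vector with $\targetstate = \ketbraq\Phi$, set $\tilde\rho := V\inputstate V^\dagger$, and define the partial matrix element
\begin{equation*}
\omega := \bramatket{\Phi}{\tilde\rho}{\Phi} \in \cL(\sH_{A''B''}),
\end{equation*}
which is positive semidefinite and satisfies $\tr\omega = \bramatket{\Phi}{(\Lambda_A\otimes\Lambda_B)(\inputstate)}{\Phi} = F\big((\Lambda_A\otimes\Lambda_B)(\inputstate),\targetstate\big)$, using purity of $\targetstate$.

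The crux is expressing $F(\tilde\rho,\targetstate\otimes\auxstate)$ in closed form for arbitrary $\auxstate$. Because $\targetstate$ is pure, $\sqrt{\targetstate\otimes\auxstate} = \ketbraq\Phi \otimes \sqrt{\auxstate}$, and a short index calculation gives
\begin{equation*}
\sqrt{\targetstate\otimes\auxstate}\;\tilde\rho\;\sqrt{\targetstate\otimes\auxstate} = \ketbraq\Phi \otimes \sqrt{\auxstate}\,\omega\,\sqrt{\auxstate}.
\end{equation*}
Combined with the standard identity $\tr\sqrt{\sqrt{B}A\sqrt{B}} = \norm{\sqrt{A}\sqrt{B}}_1$, taking square roots, tracing, and squaring gives $F(\tilde\rho,\targetstate\otimes\auxstate) = \norm{\sqrt{\omega}\sqrt{\auxstate}}_1^2$. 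I would then choose $\auxstate := \omega/\tr\omega$ (or any state if $\omega = 0$), which produces $\norm{\sqrt{\omega}\sqrt{\auxstate}}_1^2 = \tr\omega$, showing $\Fiso \geq \tr\omega = F\big((\Lambda_A\otimes\Lambda_B)(\inputstate),\targetstate\big)$; maximising over $\Lambda_A,\Lambda_B$ concludes $\Fiso \geq \Xi$.

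The main obstacle is the closed-form evaluation in the previous paragraph, and this is exactly where purity of $\targetstate$ is indispensable: for a mixed $\targetstate$ the operator $\sqrt{\targetstate\otimes\auxstate}$ no longer factors through a rank-one projector on $A'B'$, cross terms between the support of $\targetstate$ and its orthogonal complement contribute to $F(\tilde\rho,\targetstate\otimes\auxstate)$, and the supremum over $\auxstate$ generally falls strictly below $\tr\omega$. Everything else in the plan is routine manipulation of Stinespring dilations together with monotonicity of fidelity.
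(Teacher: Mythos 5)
Your proof is correct. The first direction, $\Xi \geq \Fiso$ via monotonicity of the fidelity under the partial trace applied to Stinespring dilations, is exactly the paper's argument. For the reverse direction you take a genuinely different route. The paper purifies the input state to $\rho_{ABE}$, dilates the optimal channels to local isometries, and invokes Uhlmann's theorem: since the target is pure, the Uhlmann-optimal purification of $\targetstate$ must factorise as $\targetstate \otimes \gamma_{A''B''E}$, and tracing out $E$ sandwiches the isometric fidelity between two copies of $F\big((\Lambda_A\otimes\Lambda_B)(\inputstate),\targetstate\big)$. You avoid Uhlmann entirely: using $\sqrt{\targetstate\otimes\auxstate} = \ketbraq{\Phi}\otimes\sqrt{\auxstate}$ you evaluate $F\big(V\inputstate V^\dagger, \targetstate\otimes\auxstate\big) = \norm[\big]{\sqrt{\omega}\sqrt{\auxstate}}_1^2$ in closed form, where $\omega = \bramatket{\Phi}{V\inputstate V^\dagger}{\Phi}$ is the conditional environment operator, and then exhibit the explicit maximiser $\auxstate = \omega/\tr\omega$. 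The computation checks out (including the reduction $\tr\omega = \bramatket{\Phi}{(\Lambda_A\otimes\Lambda_B)(\inputstate)}{\Phi}$, which is the fidelity precisely because the target is pure). What each approach buys: the paper's argument is shorter once Uhlmann's theorem is granted and never needs a formula for the fidelity; yours is more self-contained and constructive, since it identifies the optimal auxiliary state explicitly --- indeed by Cauchy--Schwarz $\norm[\big]{\sqrt{\omega}\sqrt{\auxstate}}_1 \leq \sqrt{\tr\omega}\,\sqrt{\tr\auxstate}$, so your choice attains the supremum over $\auxstate$ for each fixed dilation --- and it isolates transparently where purity of the target enters, namely in the factorisation of $\sqrt{\targetstate\otimes\auxstate}$, which plays the same role as the paper's claim that every purification of a pure $\targetstate$ is a product.
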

\begin{proof}
To see that the extractability is never smaller than the isometric fidelity it suffices to realise that every local isometry can be turned into an extraction channel by performing a partial trace. Since the fidelity is non-decreasing under tracing out, we immediately conclude that
\begin{equation*}
\Xi( \inputstate \to \targetstate ) \geq \Fiso ( \inputstate \to \targetstate ).
\end{equation*}
To show that this inequality holds as an equality we use Uhlmann's theorem. Let $\Lambda_{A}$ and $\Lambda_{B}$ be a pair of extraction channels that achieves optimal fidelity in the definition of extractability, i.e.~if
\begin{equation*}
\zeta_{A'B'} := ( \Lambda_{A} \otimes \Lambda_{B} ) ( \inputstate ),
\end{equation*}
then
\begin{equation*}
\Xi( \inputstate \rightarrow \targetstate ) = F( \zeta_{A'B'} , \targetstate).
\end{equation*}
Uhlmann's theorem implies that the fidelity between two mixed states equals the highest achievable fidelity between their purifications and, moreover, that one of the purifications can be fixed. In our case we pick a specific purification of $\zeta_{A'B'}$. Let $\rho_{ABE}$ be a purification of $\rho_{AB}$, for $X \in \{ A, B \}$ let $V_{X} : \sH_{X} \to \sH_{X'} \otimes \sH_{X''}$ be Naimark's dilation of the extraction channel $\Lambda_{X}$ and finally let $V_{AB} := V_{A} \otimes V_{B}$. Then, the state
\begin{equation*}
\zeta_{A'B'A''B''E} := (V_{AB} \otimes \unit_{E}) \rho_{ABE} (V_{AB}^{\dagger} \otimes \unit_{E})
\end{equation*}
is a purification of $\zeta_{A'B'}$. By Uhlmann's theorem there exists a purification of $\targetstate$, which we denote by $\gamma_{A'B'A''B''E}$, such that
\begin{equation}
\label{eq:fidelity-equality}
F( \zeta_{A'B'} , \targetstate) = F( \zeta_{A'B'A''B''E}, \gamma_{A'B'A''B''E} ).
\end{equation}
However, since $\targetstate$ is already pure, all its purifications are of the form
\begin{equation*}
\gamma_{A'B'A''B''E} = \targetstate \otimes \gamma_{A''B''E}
\end{equation*}
for some pure state $\gamma_{A''B''E}$. Since the fidelity is non-decreasing under tracing out, we have
\begin{equation*}
F( \zeta_{A'B'A''B''E}, \targetstate \otimes \gamma_{A''B''E} ) \leq F( \zeta_{A'B'A''B''}, \targetstate \otimes \gamma_{A''B''} ) \leq F( \zeta_{A'B'} , \targetstate),
\end{equation*}
which together with Eq.~\eqref{eq:fidelity-equality} implies that
\begin{equation*}
F( \zeta_{A'B'A''B''}, \targetstate \otimes \gamma_{A''B''} ) = F( \zeta_{A'B'} , \targetstate).
\end{equation*}
The left-hand side is a lower bound on the isometric fidelity, whereas the right-hand side by construction equals the extractability, which concludes the proof.
\end{proof}

To finish our discussion of the isometric fidelity, let us point out that in the literature one sometimes sees the isometries in Eqs.~\eqref{eq:statement-2} and \eqref{eq:isometric-fidelity-definition} replaced by unitaries, but using unitaries is strictly speaking not correct. For instance the unitary version of isometric fidelity has the unpleasant feature that it is not defined for all input states. The existence of a unitary $U_{A} : \sH_{A} \to \sH_{A'} \otimes \sH_{A''}$ implies that $\dim( \sH_{A} ) = \dim( \sH_{A'} ) \cdot \dim( \sH_{A''} )$. Since the dimension of the auxiliary Hilbert space $\sH_{A''}$ must be an integer, unitarity requires that the dimension of the Hilbert space $\sH_{A}$ is a multiple of the dimension of the target Hilbert space $\sH_{A'}$, which does not have to be the case. Clearly, a measure which is not defined for all states is not suitable for the purpose of making self-testing statements.

Finally, condition~\eqref{eq:statement-3} gives rise to the Mayers-Yao (MY) fidelity defined as
\begin{equation*}
\FMY ( \inputstate \to \targetstate ) := \sup_{\auxstate} \sup_{W} F \big( \inputstate , W ( \targetstate \otimes \auxstate ) W^{\dagger} \big),
\end{equation*}
where the supremum is taken over product isometries $W = W_{A} \otimes W_{B}$ for $W_{X} : \sH_{X'} \otimes \sH_{X''} \to \sH_{X}$ and auxiliary states $\auxstate \in \cS( \sH_{A''} \otimes \sH_{B''} )$. However, this quantity suffers from the same problem: it is not defined for all states, e.g.~when $\dim( \sH_{A} ) < \dim( \sH_{A'} ) $.
\section{Robust self-testing of two-qubit states}
\label{app:robust-two-qubits}
In this appendix we provide the details of the argument discussed in Section~\ref{sec:tilted-chsh}. In the first part we give the definitions of the extraction channels and compute all the operators appearing in the operator inequality. In the second part we discuss the numerical evidence supporting the conjecture.
\subsection{Operator inequality}
Let us start by writing down the Bell operator. Recall that the observables of Alice and Bob are parametrised by
\begin{align*}
A_{r} &:= \cos a \, \X + (-1)^{r} \sin a \, \Z,\\
B_{r} &:= \cos b \, \X + (-1)^{r} \sin b \, \Z
\end{align*}
for $r \in \{0, 1\}$. For these observables the tilted CHSH operator defined in Eq.~\eqref{eq:tilted-operator} reads
\begin{align*}
W_{\alpha}(a, b) &= \alpha ( \cos a \, \X + \sin a \, \Z ) \otimes \unit + 2 \cos a \cos b \, \X \otimes \X + 2 \cos a \sin b \, \X \otimes \Z\\
&\hspace{10pt}+ 2 \sin a \cos b \, \Z \otimes \X - 2 \sin a \sin b \, \Z \otimes \Z.
\end{align*}
The optimal violation is achieved for $a^{*} := \pifour$ and
\begin{equation}
\label{eq:balphastar}
b_{\alpha}^{*} := \arcsin \bigg( \sqrt{ \frac{4 - \alpha^{2}}{8} } \bigg).
\end{equation}
The corresponding optimal state is given by
\begin{equation}
\label{eq:optimal-state}
\Phi_{\alpha} := \frac{1}{4} \bigg( \unit \otimes \unit + \sqrt{ \frac{ 2 \alpha^{2} }{ 4 + \alpha^{2} } } \bigg[ \frac{\X + \Z}{\sqrt{2}} \otimes \unit + \unit \otimes \X \bigg] + \frac{ \X + \Z }{\sqrt{2}} \otimes \X + \sqrt{ \frac{ 4 - \alpha^{2} }{ 4 + \alpha^{2} } } \bigg[ \Y \otimes \Y + \frac{ \X - \Z }{ \sqrt{2} } \otimes \Z \bigg] \bigg).
\end{equation}
To see that this state is unitarily equivalent to $ \cos \theta \ket{00} + \sin \theta \ket{11}$ for $\theta$ specified in Eq.~\eqref{eq:alpha-theta} note that
\begin{equation*}
\sin 2 \theta = \sqrt{ \frac{ 4 - \alpha^{2} }{ 4 + \alpha^{2} } } \; \nbox{and} \cos 2 \theta = \sqrt{ \frac{ 2 \alpha^{2} }{ 4 + \alpha^{2} } }.
\end{equation*}
The extraction channel for Alice is precisely the channel used in Ref.~\cite{kaniewski16b}:
\begin{equation*}
[\Lambda_{A}(x)] (\rho) := \frac{ 1 + g(x) }{2} \, \rho + \frac{ 1 - g(x) }{2} \, \Gamma(x) \rho \Gamma(x),
\end{equation*}
where
\begin{equation*}
\Gamma(x) :=
\begin{cases}
\X &\nbox{if} x \in [0, \pifour],\\
\Z &\nbox{if} x \in (\pifour, \pitwo]
\end{cases}
\end{equation*}
and
\begin{equation*}
\gfundef.
\end{equation*}
It is easy to check that $x = \pifour$ gives the identity channel, whereas $x = 0$ and $x = \pitwo$ correspond to full dephasing. The channel of Bob has the same form except that the identity channel should arise for the angle $b_{\alpha}^{*}$ defined in Eq.~\eqref{eq:balphastar}. Let us define the ``effective angle'' $h_{\alpha}(x)$ as a piecewise linear function which maps the interval $[0, b_{\alpha}^{*}]$ onto $[0, \pifour]$ and $[b_{\alpha}^{*}, \pitwo]$ onto $[\pifour, \pitwo]$:
\begin{equation*}
h_{\alpha}(x) :=
\begin{cases}
\frac{\pi}{4} \cdot \frac{x}{b_{\alpha}^{*}} &\nbox{if} x \in [0, b_{\alpha}^{*}],\\
\frac{\pi}{2} - \frac{\pi}{4} \cdot \frac{ \pi - 2x }{ \pi - 2 b_{\alpha}^{*} } &\nbox{if} x \in (b_{\alpha}^{*}, \pitwo].
\end{cases}
\end{equation*}
These definitions allow us to write the extraction channel of Bob as
\begin{equation*}
\Lambda_{B}(x) := \Lambda_{A}( h_{\alpha}(x) ).
\end{equation*}
The operator $K_{\alpha}(a, b)$ is obtained by applying the dual channels to the ideal state given in Eq.~\eqref{eq:optimal-state}. Since the dephasing channels are self-dual, we have
\begin{equation*}
K_{\alpha}(a, b) := \big( \Lambda_{A}(a) \otimes \Lambda_{B}(b) \big) ( \Phi_{\alpha} ).
\end{equation*}
The operator inequality~\eqref{eq:operator-inequality} is equivalent to the operator
\begin{equation*}
T_{\alpha}(a, b) := K_{\alpha}(a, b) - s_{\alpha} W_{\alpha}(a, b) - \mu_{\alpha} \unit
\end{equation*}
being positive semidefinite for
\params
Since the dephasing basis changes at $a = \pifour$ and $b = b_{\alpha}^{*}$, there are in principle four distinct cases that need to be considered. In the case of CHSH the presence of symmetries allows one to reduce the analysis of the entire square ($[0, \pitwo] \times [0, \pitwo]$) to a single quarter ($[0, \pifour] \times [0, \pifour]$). In the tilted case this symmetry is partially broken, but we still have
\begin{equation}
\label{eq:unitary-equivalence}
T_{\alpha}(a, b) = U T_{\alpha}( \pitwo - a, \, b ) U^{\dagger},
\end{equation}
where
\begin{equation}
\label{eq:U-def}
U := \frac{ X + Z }{\q} \otimes X.
\end{equation}
This observation implies that it suffices to analyse the half of the square corresponding to $a \in [0, \pifour]$.
\subsection{Numerical evidence}
Our goal is to gather evidence that the operator $T_{\alpha}(a, b)$ is positive semidefinite for $\alpharan, a \in [0, \pifour], b \in [0, \pitwo]$. For this purpose, we have generated a grid over the parameter space in the following manner.
\begin{itemize}
\item We have chosen $\alpha$ in the range $[0, 1.999]$ with a step size of $0.001$.
\item We have discretised the angle of Alice by splitting the interval $[0, \pi/4]$ into $99$ equally-spaced intervals $[a_k, a_{k+1}]$, where $a_1 = 0$, $a_{100} = \pi/4$ and $1\leq k \leq 100$. Similarly, for the angle of Bob we have discretised $[0, \pi/2]$ as intervals $[b_m, b_{m+1}]$ of equal length, with $b_1 = 0$, $b_{200} = \pi/2$ and $0\leq m \leq 200$. For fixed $\alpha$, we thus obtain the grid $\{(a_k, b_m) \mid 1\leq k \leq 100, 1\leq m \leq 200\}$.
\end{itemize}
Using the \texttt{linalg} library from \texttt{Numpy} (a scientific computing package for Python) we have computed the eigenvalues of $T_{\alpha}(a, b)$ at every point of the grid. We have found that the smallest value equals $-1.317 \cdot 10^{-9}$ and occurs for $\alpha = 1.998$. Our code can be freely accessed online \cite{tiltedcode}.
\section{CHSH violation does not imply nontrivial extractability}
\label{app:counterexample}
In this appendix we construct a state which violates the CHSH inequality, but whose singlet extractability does not exceed the trivial value of $\frac{1}{2}$. The proof hinges on two technical propositions and since proving them within the main argument would be rather distracting, let us use them without proofs. Complete proofs can be found in Section~\ref{app:proof-details}.
\subsection{The argument}
Consider a state $\rhoXYAB$ acting on $\sH_{X} \otimes \sH_{Y} \otimes \sH_{A} \otimes \sH_{B}$ for $\sH_{X}, \sH_{Y} \equiv \amsbb{C}^{3}$ and $\sH_{A}, \sH_{B} \equiv \amsbb{C}^{2}$, where subsystems $X$ and $A$ belong to Alice and subsystems $Y$ and $B$ belong to Bob. The state is defined with respect to the CHSH operator corresponding to the observables given in Eq.~\eqref{eq:observables} which reads
\begin{equation*}
\defcq{W}{}{W_{AB}}
\end{equation*}
for the two-qubit operators $W_{AB}^{xy}$ given by
\begin{equation}
\begin{array}{c|ccc}
x \backslash y & 0 & 1 & 2 \\ \hline
0 & 2 \Z \otimes \Z & 2 \Z \otimes \Z & 2 \Z \otimes \Z \\
1 & 2 \Z \otimes \Z & \qquad \X \otimes ( - \X + \Z ) + \Z \otimes ( \X + \Z ) \qquad & 2 \X \otimes \Z \\
2 & 2 \Z \otimes \Z & 2 \Z \otimes \X & - 2 \Z \otimes \Z
\end{array}.
\end{equation}
We choose the state $\rhoXYAB$ to be of the form
\begin{equation*}
\defcq{\rhoXYAB}{p_{xy}}{\rho_{AB}}
\end{equation*}
for some probability distribution $\{ p_{xy} \}_{x, y = 0}^{2}$ and two-qubit states $\rho_{AB}^{xy}$ chosen to satisfy
\begin{equation}
\label{eq:rhoxy}
\ave{ W_{AB}^{xy}, \rho_{AB}^{xy} } =
\begin{cases}
2 \sqrt{2} &\nbox{if} $x = y = 1$,\\
2 & \nbox{otherwise.}
\end{cases}
\end{equation}
The precise form of the states $\rho_{AB}^{xy}$ will be specified later. Recall that we refer to the point $x = y = 1$ as ``the centre'' and the remaining 8 points as ``the frame''. A simple calculation shows that
\begin{equation}
\label{eq:chsh-violation}
\beta = \ave{ W, \rhoXYAB } = 2 + ( 2 \sqrt{2} - 2 ) p_{11},
\end{equation}
i.e.~the CHSH inequality is violated as long as $p_{11} > 0$. Our goal is to prove that there exists a probability distribution satisfying $p_{11} > 0$ and two-qubit states satisfying Eq.~\eqref{eq:rhoxy} such that the resulting state $\rho_{XYAB}$ satisfies
\begin{equation*}
\Xi( \rhoXYAB \rightarrow \targetstate^{+} ) = \frac{1}{2},
\end{equation*}
where $\targetstate^{+}$ is a maximally entangled state of two qubits. The quantity does not depend on which maximally entangled state we choose and for this proof, it is convenient to assume that $\ket{\Phi^{+}} = ( \ket{00} + \ket{11} ) / \sqrt{2}$. By definition of extractability showing existence of a suitable probability distribution and two-qubit states is equivalent to showing that for all local extraction channels $\Lambda_A, \Lambda_B : \cL( \amsbb{C}^{3} \otimes \amsbb{C}^{2} ) \to \cL( \amsbb{C}^{2} )$ we have
\begin{equation*}
F \big( ( \Lambda_{A} \otimes \Lambda_{B} ) ( \rhoXYAB ) , \targetstate^{+} \big) \leq \frac{1}{2}.
\end{equation*}
Since the registers $X$ and $Y$ are classical, instead of optimising over the most general channels from $\cL( \amsbb{C}^{3} \otimes \amsbb{C}^{2} )$ to $\cL( \amsbb{C}^{2} )$ it suffices to consider channels which first read the classical register and then apply a suitable qubit ($\cL( \amsbb{C}^{2} ) \to \cL( \amsbb{C}^{2} )$) channel (see Lemma~\ref{lem:classical-quantum-channel} for details). Let $\Lambda_{A}^{x}$ be the qubit channel of Alice corresponding to the value of the classical register $X$ being $x$ and similarly let $\Lambda_{B}^{y}$ be the qubit channel of Bob corresponding to $Y$ having value $y$. Since the target state is pure, the fidelity equals the inner product which implies
\begin{equation}
\label{eq:fidelity-sum}
F \big( ( \Lambda_{A} \otimes \Lambda_{B} ) ( \rhoXYAB ) , \targetstate^{+} \big) = \ave{ ( \Lambda_{A} \otimes \Lambda_{B} ) ( \rhoXYAB ), \targetstate^{+} } = \sum_{xy} p_{xy} \ave[\big]{ ( \Lambda_{A}^{x} \otimes \Lambda_{B}^{y} )( \rho_{AB}^{xy} ), \targetstate^{+} }.
\end{equation}
The intuition behind the proof goes as follows: there are no extraction channels which perform well both on the frame and in the centre. We make this intuition rigorous in two steps. The following proposition shows that if Alice and Bob perform well on the frame, then the channels $\Lambda_{A}^{1}$ and $\Lambda_{B}^{1}$ must significantly contract the Bloch sphere. Note that in the argument below only six points of the frame are used (we leave the remaining two points undefined).
\begin{prop}
\label{prop:ewav-eigenvalue}
\propewaveigenvalue
\end{prop}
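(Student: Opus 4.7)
My plan is to prove the bound in Bloch-vector language and show that the six relevant fidelity constraints force $\omega_A$ to be approximately rank-one. To reduce clutter I will use the shorthand $\sigma_A^{x,i} := \Lambda_A^x(\ketbraq{i})$ and $\sigma_B^{y,j} := \Lambda_B^y(\ketbraq{j})$, so that $\omega_A = \tfrac{1}{2}(\sigma_A^{1,0} + \sigma_A^{1,1})$, and I will exploit the well-known identity $\bramatket{\Phi^{+}}{\rho\otimes\tau}{\Phi^{+}} = \tfrac{1}{2}\tr(\rho\tau\tran) = \tfrac{1}{4}(1+\vec{r}\cdot\vec{s}^{\,\ast})$, where $\vec{s}^{\,\ast}$ is $\vec{s}$ with its $y$-component negated. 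The crucial reduction is the qubit bound $\lmin(\omega) \leq \tfrac{1}{2}(1 - |\vec{t}|^2)$ for a state of Bloch vector $\vec{t}$; applied to $\vec{t} = (\vec{r}+\vec{r}\,')/2$ with $\vec{r}, \vec{r}\,'$ the Bloch vectors of $\sigma_A^{1,0}, \sigma_A^{1,1}$, it splits additively into $\tfrac{1}{4}(1-|\vec{r}|^2) + \tfrac{1}{4}(1-|\vec{r}\,'|^2) + \tfrac{1}{2}(1-\vec{r}\cdot\vec{r}\,')$, a ``purity deficit'' for each vector plus an ``alignment deficit.''

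The first step is to turn the six $\varepsilon_{xy}$'s into inner-product inequalities. For $(x,y)\neq(0,0)$ the state $\rho_{AB}^{xy}$ is a $50/50$ classical mixture of two product states, so $\varepsilon_{xy}$ equals the average of two non-negative quantities of the form $\tfrac{1}{2}-\langle\cdot,\Phi^{+}\rangle$; each summand is therefore individually bounded by $2\varepsilon_{xy}$. This yields one equality (for $\varepsilon_{00}$) and ten inequalities of the form $\tr(\sigma_A^{x,i}\,(\sigma_B^{y,j})\tran) \geq 1 - c\,\varepsilon_{xy}$ with $c \in \{2,4\}$. The edges relevant to the $\omega_A$ statement arrange themselves into a closed cycle
\begin{equation*}
\sigma_A^{1,0} \overset{\varepsilon_{10}}{\longleftrightarrow} \sigma_B^{0,0} \overset{\varepsilon_{20}}{\longleftrightarrow} \sigma_A^{2,0} \overset{\varepsilon_{22}}{\longleftrightarrow} \sigma_B^{2,1} \overset{\varepsilon_{02}}{\longleftrightarrow} \sigma_A^{0,1} \overset{\varepsilon_{00}}{\longleftrightarrow} \sigma_B^{0,1} \overset{\varepsilon_{10}}{\longleftrightarrow} \sigma_A^{1,1},
\end{equation*}
where each labelled arrow is an instance of the previous inequality. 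Because the cycle has an even number of Alice-Bob crossings the two endpoints live on Alice's side, and chaining the relations consistently will give direct control of $\vec{r}\cdot\vec{r}\,'$.

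The chaining step uses a qubit ``triangle'' lemma: if $A$, $B$, $C$ are qubit density matrices with $\tr(AC)\ge 1-\eta_{1}$ and $\tr(BC)\ge 1-\eta_{2}$, then $\tr(AB)\ge 1-2(\eta_{1}+\eta_{2})$. This is a three-line Bloch computation: adding the hypotheses yields $(\vec{a}+\vec{b})\cdot\vec{c}\ge 2-2(\eta_1+\eta_2)$, then $|\vec{c}|\le 1$ gives $|\vec{a}+\vec{b}|\ge 2-2(\eta_1+\eta_2)$, and combining with $|\vec{a}|^2,|\vec{b}|^2\le 1$ forces $\vec{a}\cdot\vec{b} \ge 1-4(\eta_1+\eta_2)$, which is the claim. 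Applied to successive pairs of edges of the cycle---first eliminating the three intermediate Bob-side nodes $\sigma_B^{0,0}, \sigma_B^{2,1}, \sigma_B^{0,1}$ and then merging the resulting three Alice-to-Alice links---the lemma produces an alignment bound $\vec{r}\cdot\vec{r}\,' \geq 1 - L$ with $L$ an explicit linear combination of the $\varepsilon_{xy}$'s. The two purity deficits are even easier: the $\varepsilon_{10}$ edge alone gives $\vec{r}\cdot\vec{s}^{\,\ast}\ge 1-8\varepsilon_{10}$ for the Bloch vector $\vec{s}$ of $\sigma_B^{0,0}$, so $|\vec{r}|\ge 1-8\varepsilon_{10}$ and $1-|\vec{r}|^2\le 16\varepsilon_{10}$ (likewise for $\vec{r}\,'$).

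Inserting the three bounds into the Bloch decomposition of $\tfrac{1}{2}(1-|\vec{t}|^2)$ produces the stated inequality for $\omega_A$; the $\omega_B$ inequality follows by the symmetric cycle in which the two $\varepsilon_{10}$ edges become $\varepsilon_{01}$. The ``in particular'' corollary is then pure algebra, as $\max(\varepsilon_{01},\varepsilon_{10}) \le \varepsilon_{01}+\varepsilon_{10}$ gives $4[8\varepsilon_{00}+16(\varepsilon_{02}+\varepsilon_{20}+\varepsilon_{22})+3\max(\varepsilon_{01},\varepsilon_{10})] \le 4\cdot 62\,\ewav = 248\,\ewav$. The main obstacle I anticipate is obtaining the precise coefficients claimed, and especially the comparatively small $3\varepsilon_{10}$: a naive iterated application of the triangle lemma doubles the weight of $\varepsilon_{10}$ at every step of the chain, because $\varepsilon_{10}$ appears at \emph{both} ends of the cycle. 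To recover the stated coefficient, one must schedule the lemma applications so that $\varepsilon_{10}$ is treated as a single global constraint on the sum $\vec{r}\cdot\vec{s}^{\,\ast} + \vec{r}\,'\cdot(\vec{s}\,')^{\ast}$ (with $\vec{s},\vec{s}\,'$ the Bloch vectors of $\sigma_B^{0,0}, \sigma_B^{0,1}$) and not broken into its two summands until the final step, when the splitting is absorbed by the quadratic slack of the $(1-|\vec{t}|^2)$ bound rather than amplified linearly along the cycle.
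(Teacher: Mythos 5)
Your plan is essentially the paper's own proof translated into Bloch-vector language: the paper derives the same ten product-state inequalities from the five mixture points plus the $(0,0)$ equality, chains them along exactly the cycle you describe using a triangle inequality with identical constants (Lemma~\ref{lem:inner-product-triangle}, proved there for general density matrices via the Schatten $2$-norm rather than Bloch vectors), keeps the two $\varepsilon_{10}$ deficits together as $a+b=4\varepsilon_{10}$ and averages over the two chaining orders, and finishes with $\lmin(\omega_A)\le 2\lambda(1-\lambda)=1-\tr(\omega_A^2)$, which is precisely your $\tfrac12(1-|\vec t\,|^2)$. The only substantive deviation is the endgame: the paper discards the purity deficits outright via $\tr(\omega_A^2)\ge\ave{\sigma_0^1,\sigma_1^1}$ and bounds only the alignment term, which gives exactly the claimed $32\varepsilon_{00}+64(\varepsilon_{02}+\varepsilon_{20}+\varepsilon_{22})+12\varepsilon_{10}$, whereas your three-term split, carried out with the right coefficients, yields the marginally stronger $16\varepsilon_{00}+32(\varepsilon_{02}+\varepsilon_{20}+\varepsilon_{22})+8\varepsilon_{10}$. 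One arithmetic correction is needed for that to go through: $\tfrac12(1-|\vec t\,|^2)=\tfrac18(1-|\vec r\,|^2)+\tfrac18(1-|\vec r\,'|^2)+\tfrac14(1-\vec r\cdot\vec r\,')$, i.e.\ half the coefficients you wrote; with your stated $\tfrac14,\tfrac14,\tfrac12$ the $\varepsilon_{10}$ contribution comes out as $16\varepsilon_{10}$ and overshoots the claimed $12\varepsilon_{10}$ (alternatively, drop the purity terms as the paper does and use $\lmin(\omega_A)\le\tfrac12(1-\vec r\cdot\vec r\,')$ directly).
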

The fact that the channels $\Lambda_{A}^{1}$ and $\Lambda_{B}^{1}$ map the centre of the Bloch sphere to a point close to the boundary means that the input states are to a large extent erased. It is therefore not surprising that applying such channels to a maximally entangled state annihilates most of its entanglement.
\begin{prop}
\label{prop:centre-bound}
\propcentrebound
\end{prop}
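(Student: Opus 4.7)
My plan is to reduce to the canonical maximally entangled state $\ket{\Phi^+} = (\ket{00} + \ket{11})/\q$ via local-unitary invariance, expand the resulting overlap in the Pauli basis, and finish with a single Cauchy--Schwarz combined with the fact that positivity of a qubit channel forces a quadratic trade-off between the shift of its Bloch-ball centre and the contraction matrix. Every maximally entangled two-qubit state equals $(V_A \otimes V_B)\ketbraq{\Phi^+}(V_A \otimes V_B)^\dagger$ for local unitaries. Absorbing the unitaries for $\Psi_1$ into $\Lambda_A, \Lambda_B$ on the input side and those for $\Psi_2$ on the output side only conjugates $\Lambda_X(\unit_2/2)$ by a unitary, so the smaller-eigenvalue hypothesis is preserved, and we may assume $\Psi_1 = \Psi_2 = \ketbraq{\Phi^+}$.

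Next I parametrise each channel by its Bloch-vector shift $\vec{c}_X \in \amsbb{R}^3$ (so that $|\vec{c}_X| = 1 - 2\lambda_X$) and its linear part $M_X \in \amsbb{R}^{3 \times 3}$ given by $\Lambda_X(P_j) = \sum_i (M_X)_{ij} P_i$ for $(P_1, P_2, P_3) = (\X, \Y, \Z)$. Using $\ketbraq{\Phi^+} = \tfrac{1}{4}(\unit \otimes \unit + \X \otimes \X - \Y \otimes \Y + \Z \otimes \Z)$ and Pauli orthogonality, the overlap becomes
\begin{equation*}
\ave[\big]{(\Lambda_A \otimes \Lambda_B)(\ketbraq{\Phi^+}),\, \ketbraq{\Phi^+}} = \tfrac{1}{4}\sum_{\mu,\nu = 0}^{3} \varepsilon_\mu \varepsilon_\nu \, (f_A)_{\mu\nu}(f_B)_{\mu\nu},
\end{equation*}
where $\varepsilon_\mu \in \{\pm 1\}$ is the sign with which $P_\mu \otimes P_\mu$ appears in $\ketbraq{\Phi^+}$, and $f_X \in \amsbb{R}^{16}$ is the Pauli coefficient vector with $(f_X)_{00} = 1$, $(f_X)_{i0} = (\vec{c}_X)_i$, $(f_X)_{0j} = 0$, $(f_X)_{ij} = (M_X)_{ij}$ for $i,j \in \{1,2,3\}$. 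Since the signs $\varepsilon_\mu \varepsilon_\nu$ are irrelevant to Euclidean norms, Cauchy--Schwarz bounds this inner product by $\tfrac{1}{4}\norm[\big]{f_A}\,\norm[\big]{f_B}$, where $\norm[\big]{f_X}^2 = 1 + |\vec{c}_X|^2 + \norm[\big]{M_X}_{\textnormal{F}}^2$.

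The main obstacle is the Bloch-ball bound on $\norm[\big]{M_X}_{\textnormal{F}}^2$. Since $\Lambda_X$ is positive, $|M_X \vec{r} + \vec{c}_X| \leq 1$ for every unit vector $\vec{r} \in \amsbb{R}^3$; symmetrising $\vec{r} \leftrightarrow -\vec{r}$ cancels the cross term and yields $|M_X \vec{r}|^2 + |\vec{c}_X|^2 \leq 1$. Taking the supremum gives $\norm[\big]{M_X}_\infty^2 \leq 1 - |\vec{c}_X|^2 = 4\lambda_X(1-\lambda_X)$, and since $M_X$ is $3 \times 3$ we obtain $\norm[\big]{M_X}_{\textnormal{F}}^2 \leq 3\norm[\big]{M_X}_\infty^2 \leq 12\lambda_X(1-\lambda_X)$. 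Substituting,
\begin{equation*}
\norm[\big]{f_X}^2 \leq 1 + (1-2\lambda_X)^2 + 12\lambda_X(1-\lambda_X) = 2 + 8\lambda_X - 8\lambda_X^2 \leq 2 + 8\lambda,
\end{equation*}
so that $\tfrac{1}{4}\norm[\big]{f_A}\,\norm[\big]{f_B} \leq \tfrac{1}{4}(2 + 8\lambda) = \tfrac{1}{2} + 2\lambda$, as required. Everything except the Bloch-ball inequality is routine bookkeeping and a single Cauchy--Schwarz; the hard step is the one that converts the spectral hypothesis on $\Lambda_X(\unit_2/2)$ into the quadratic contraction of $M_X$ which drives the estimate.
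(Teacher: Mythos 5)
Your proof is correct, but it takes a genuinely different route from the paper's. The paper normalises only $\Psi_{1}$ to the standard maximally entangled state, splits it as a separable state $\tau = ( \unit \otimes \unit - \Y \otimes \Y )/4$ plus the two correlators $\tfrac{1}{4} \X \otimes \X$ and $\tfrac{1}{4} \Z \otimes \Z$, bounds the separable piece by $\tfrac{1}{2}$, and controls each correlator separately via a spectral lemma stating that $\Lambda(\Gamma)$ is squeezed into $[-2\sqrt{\lambda}\,\unit, 2\sqrt{\lambda}\,\unit]$ for any traceless unitary Hermitian $\Gamma$ (so each correlator contributes at most $\lambda$ to the overlap). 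You instead normalise both $\Psi_{1}$ and $\Psi_{2}$, expand the whole overlap in the Pauli basis, and apply a single global Cauchy--Schwarz, feeding in the Frobenius bound $\norm{M_X}_{\textnormal{F}}^{2} \leq 3 \norm{M_X}_{\infty}^{2} \leq 12 \lambda_X ( 1 - \lambda_X )$. The core positivity input is essentially the same in both arguments --- your Bloch-ball trade-off $\abs{M_X \vec{r}}^{2} + \abs{\vec{c}_X}^{2} \leq 1$ is equivalent to the paper's Lemma~\ref{lem:spectrum-bound}, both giving $\norm{\Lambda(\Gamma)}_{\infty}^{2} \leq 4 \lambda ( 1 - \lambda )$ --- but the bookkeeping differs. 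The paper's term-by-term decomposition makes it transparent which correlations are being destroyed and reuses the general fact that separable states have overlap at most $\tfrac{1}{2}$ with any maximally entangled state; your global Cauchy--Schwarz is more mechanical and in fact yields the marginally sharper intermediate bound $\tfrac{1}{2} + 2 \lambda ( 1 - \lambda )$ before the final relaxation, though it is somewhat fortuitous that the generous factor $3$ in the Frobenius estimate still lands exactly on the stated constant $\tfrac{1}{2} + 2 \lambda$. All steps check out: the local-unitary reduction preserves the eigenvalue hypothesis, trace preservation kills the $(f_X)_{0j}$ entries, and the arithmetic $1 + ( 1 - 2 \lambda_X )^{2} + 12 \lambda_X ( 1 - \lambda_X ) = 2 + 8 \lambda_X - 8 \lambda_X^{2}$ is right.
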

These two propositions immediately imply the main result.
\begin{prop}
Let
\begin{equation*}
\defcq{\rhoXYAB}{p_{xy}}{\rho_{AB}},
\end{equation*}
where the states $\rho_{AB}^{xy}$ corresponding to the frame are specified in Proposition~\ref{prop:ewav-eigenvalue}, the state $\rho_{AB}^{11} = \Psi$ is some pure maximally entangled state and the probability distribution is given by
\begin{align*}
p_{00} &= \frac{4}{31} (1 - v), & p_{01} &= p_{10} = \frac{3}{62} (1 - v), & p_{02} &= p_{20} = p_{22} = \frac{8}{31} (1 - v),\\
p_{11} &= v, & p_{12} &= p_{21} = 0
\end{align*}
for $v = 1/597$. This state satisfies
\begin{equation*}
\Xi( \rhoXYAB \to \targetstate^{+} ) = \frac{1}{2}.
\end{equation*}
\end{prop}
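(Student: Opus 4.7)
The plan is to combine the two preceding propositions additively, using that the probability distribution $\{ p_{xy} \}$ has been chosen so that the weights on the six specified frame points coincide, up to the factor $1 - v$, with the weights appearing in the definition of $\ewav$. First I would verify the trivial lower bound $\Xi( \rhoXYAB \to \targetstate^{+} ) \geq \frac{1}{2}$: for any bipartite state, the constant channels mapping everything to $\ketbraq{0}$ produce $( \Lambda_A \otimes \Lambda_B )( \rho ) = \ketbraq{00}$, which has overlap $\frac{1}{2}$ with the maximally entangled target $\targetstate^{+}$.

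For the matching upper bound I would invoke Lemma~\ref{lem:classical-quantum-channel} to reduce the optimisation to two triples of qubit channels $\Lambda_A^x, \Lambda_B^y$ indexed by the classical register values $x, y \in \{ 0, 1, 2 \}$. The factorisation in Eq.~\eqref{eq:fidelity-sum} then splits the fidelity into a sum over the nine points of the grid. The points $(1, 2)$ and $(2, 1)$ drop out since $p_{12} = p_{21} = 0$. The six remaining frame points contribute $\sum p_{xy} \big( \tfrac{1}{2} - \varepsilon_{xy} \big)$ by definition of $\varepsilon_{xy}$ in Eq.~\eqref{eq:epsilonxy}, while the centre $(1, 1)$ contributes $p_{11} \cdot \ave{ ( \Lambda_A^1 \otimes \Lambda_B^1 )( \Psi ), \targetstate^{+} }$.

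To bound the centre contribution I would chain the two propositions: Proposition~\ref{prop:ewav-eigenvalue} forces the smaller eigenvalues of $\omega_A := \Lambda_A^1( \unit_2 / 2 )$ and $\omega_B := \Lambda_B^1( \unit_2 / 2 )$ to be at most $248 \, \ewav$, and Proposition~\ref{prop:centre-bound} then yields $\ave{ ( \Lambda_A^1 \otimes \Lambda_B^1 )( \Psi ), \targetstate^{+} } \leq \tfrac{1}{2} + 496 \, \ewav$. The key bookkeeping step is to check that the specific weights $( 8, 16, 3, 3, 16, 16 ) / 62$ encoded in the frame probabilities give $\sum_{\textnormal{frame}} p_{xy} \varepsilon_{xy} = ( 1 - v ) \, \ewav$. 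Putting the pieces together,
\begin{equation*}
F \leq \tfrac{1 - v}{2} - ( 1 - v ) \, \ewav + \tfrac{v}{2} + 496 \, v \, \ewav = \tfrac{1}{2} + ( 497 v - 1 ) \, \ewav.
\end{equation*}
Since $v = 1/597 < 1/497$ and $\ewav \geq 0$, the right-hand side is at most $\tfrac{1}{2}$, which together with the trivial lower bound yields equality.

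The only substantive step of the proof itself is the arithmetic collation above; all of the hard work is packed inside the two cited propositions. The factor $248$ appearing in Proposition~\ref{prop:ewav-eigenvalue} is what fixes the admissible range of $v$, and the choice $v = 1/597$ is simply a convenient value comfortably below the critical threshold $1/497$. The claimed CHSH value $\beta \approx \thrviol$ then follows by substituting $v$ into Eq.~\eqref{eq:chsh-violation}: $\beta = 2 + ( 2 \sqrt{2} - 2 ) / 597$.
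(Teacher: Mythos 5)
Your proposal is correct and follows essentially the same route as the paper: reduce to per-register qubit channels via Lemma~\ref{lem:classical-quantum-channel}, expand the fidelity as the weighted sum of Eq.~\eqref{eq:fidelity-sum}, collect the frame contribution into $(1-v)\big(\tfrac12-\ewav\big)$, and bound the centre by chaining Propositions~\ref{prop:ewav-eigenvalue} and~\ref{prop:centre-bound}. The only divergence is in the constant: you correctly compute the centre bound as $\tfrac12+2\cdot 248\,\ewav=\tfrac12+496\,\ewav$ (so any $v\leq 1/497$ suffices), whereas the paper writes $\tfrac12+596\,\ewav$ and pins $v=1/597$ exactly --- both chains are valid since $\ewav\geq 0$ --- and your explicit verification of the trivial lower bound $\Xi\geq\tfrac12$ is a welcome addition that the paper leaves implicit.
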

\begin{proof}
From Eq.~\eqref{eq:fidelity-sum} we have
\begin{align*}
F \big( ( &\Lambda_{A} \otimes \Lambda_{B} ) ( \rhoXYAB ) , \targetstate^{+} \big) = \sum_{xy} p_{xy} \ave[\big]{ ( \Lambda_{A}^{x} \otimes \Lambda_{B}^{y} )( \rho_{AB}^{xy} ), \targetstate^{+} }\\
&= \sum_{ (x, y) \neq (1, 1) } p_{xy} \ave[\big]{ ( \Lambda_{A}^{x} \otimes \Lambda_{B}^{y} )( \rho_{AB}^{xy} ), \targetstate^{+} } + p_{11} \ave[\big]{ ( \Lambda_{A}^{1} \otimes \Lambda_{B}^{1} )( \Psi ), \targetstate^{+} }.
\end{align*}
The inner product in the first term can be written in terms of $\varepsilon_{xy}$ defined in Proposition~\ref{prop:ewav-eigenvalue}. A direct calculation gives
\begin{equation*}
\sum_{ (x, y) \neq (1, 1) } p_{xy} \ave[\big]{ ( \Lambda_{A}^{x} \otimes \Lambda_{B}^{y} )( \rho_{AB}^{xy} ), \targetstate^{+} } = \sum_{ (x, y) \neq (1, 1) } p_{xy} \Big( \frac{1}{2} - \varepsilon_{xy} \Big) = \frac{1}{2} ( 1 - v ) - (1 - v) \ewav = ( 1 - v ) \Big( \frac{1}{2} - \ewav \Big)
\end{equation*}
for $\ewav$ defined in Proposition~\ref{prop:ewav-eigenvalue}. Combining Propositions~\ref{prop:ewav-eigenvalue} and~\ref{prop:centre-bound} leads to
\begin{equation*}
\ave[\big]{ ( \Lambda_{A}^{1} \otimes \Lambda_{B}^{1} )( \Psi ), \targetstate^{+} } \leq \frac{1}{2} + 596 \ewav.
\end{equation*}
Adding the two up immediately yields
\begin{equation*}
F \big( ( \Lambda_{A} \otimes \Lambda_{B} ) ( \rhoXYAB ) , \targetstate^{+} \big) \leq ( 1 - v ) \Big( \frac{1}{2} - \ewav \Big) + v \Big( \frac{1}{2} + 596 \ewav \Big) = \frac{1}{2} + ( 597 v - 1 ) \ewav = \frac{1}{2}.
\end{equation*}
\end{proof}
The value $p_{11} = 1 / 597$ plugged into Eq.~\eqref{eq:chsh-violation} gives the CHSH violation of $\beta \approx \thrviol$.
\subsection{Proof details}
\label{app:proof-details}
In this section we prove Propositions~\ref{prop:ewav-eigenvalue} and \ref{prop:centre-bound} used in the main argument. To do that we first need to prove three auxiliary lemmas.

The first lemma is a triangle-type inequality for the inner product of (finite-dimensional) density matrices.
\begin{lemma}
\label{lem:inner-product-triangle}
For finite-dimensional density matrices $\rho_{0}, \rho_{1}$ and $\sigma$ we always have
\begin{equation*}
\ave{ \rho_{0}, \rho_{1} } \geq 2 \big( \ave{ \rho_{0}, \sigma } + \ave{ \rho_{1}, \sigma } \big) - 3.
\end{equation*}
In particular, if
\begin{align*}
\ave{ \rho_{0}, \sigma } &\geq 1 - \delta_{0},\\
\ave{ \rho_{1}, \sigma } &\geq 1 - \delta_{1},
\end{align*}
then
\begin{equation*}
\ave{ \rho_{0}, \rho_{1} } \geq 1 - 2 ( \delta_{0} + \delta_{1} ).
\end{equation*}
\end{lemma}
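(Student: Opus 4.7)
The plan is to start from the observation that for any Hermitian matrix $M$, $\tr(M^2) \geq 0$, and to choose $M$ so that expanding $\tr(M^2)$ produces the cross term $\tr(\rho_0 \rho_1)$ balanced against $\tr(\rho_0 \sigma)$ and $\tr(\rho_1 \sigma)$ with the right coefficients. The natural candidate is $M := 2\sigma - \rho_0 - \rho_1$, which is Hermitian since $\rho_0,\rho_1,\sigma$ are. Expanding gives
\begin{equation*}
0 \leq \tr\big[(2\sigma-\rho_0-\rho_1)^2\big] = 4\tr(\sigma^2) + \tr(\rho_0^2) + \tr(\rho_1^2) + 2\tr(\rho_0\rho_1) - 4\tr(\sigma\rho_0) - 4\tr(\sigma\rho_1).
\end{equation*}

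Next, I would use the fact that for any density matrix $\rho$ the eigenvalues lie in $[0,1]$ and sum to $1$, so $\tr(\rho^2)\leq 1$. Applying this to $\rho_0,\rho_1,\sigma$ yields $\tr(\rho_0^2) + \tr(\rho_1^2) + 4\tr(\sigma^2) \leq 6$, and rearranging the previous inequality gives
\begin{equation*}
2\ave{\rho_0,\rho_1} \geq 4\ave{\rho_0,\sigma} + 4\ave{\rho_1,\sigma} - 6,
\end{equation*}
which divides through to the claimed bound $\ave{\rho_0,\rho_1} \geq 2(\ave{\rho_0,\sigma}+\ave{\rho_1,\sigma}) - 3$. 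The "in particular" statement is then immediate by substituting $\ave{\rho_0,\sigma}\geq 1-\delta_0$ and $\ave{\rho_1,\sigma}\geq 1-\delta_1$, which yields $\ave{\rho_0,\rho_1} \geq 2(1-\delta_0) + 2(1-\delta_1) - 3 = 1 - 2(\delta_0+\delta_1)$.

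There is no real obstacle here: the only mildly nonobvious step is guessing the right combination $2\sigma-\rho_0-\rho_1$ (as opposed to the more symmetric $\sigma - \rho_0 - \rho_1 + \sigma$, etc.), and this is what produces the factor of $2$ in front of the inner products with $\sigma$. One way to motivate the choice is to note that for equality one wants the identity $\rho_0 + \rho_1 = 2\sigma$ to hold (e.g.\ in the trivial case $\rho_0=\rho_1=\sigma$ pure, where the bound is tight), so the quantity to square must vanish in that configuration. The use of $\tr(\rho^2)\leq 1$ is a very mild slack, and an identical argument works in any finite dimension because no dimension-dependent factor enters.
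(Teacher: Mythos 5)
Your proof is correct. You certify the inequality by the single sum-of-squares identity $0 \leq \tr\big[(2\sigma - \rho_0 - \rho_1)^2\big]$ together with the purity bounds $\tr(\rho_0^2), \tr(\rho_1^2), \tr(\sigma^2) \leq 1$; the algebra checks out and the ``in particular'' part follows by direct substitution. The paper instead starts from the triangle inequality for the Schatten $2$-norm, $\norm{\rho_0 - \rho_1}_2 \leq \norm{\rho_0 - \sigma}_2 + \norm{\sigma - \rho_1}_2$, squares it, and then applies the purity bounds and the mean inequality $\sqrt{ab} \leq (a+b)/2$ to remove the square root. The two arguments are in fact the same inequality in disguise: combining the triangle inequality with AM--GM is exactly the parallelogram-type bound $\norm{\rho_0 - \rho_1}_2^2 \leq 2\norm{\rho_0 - \sigma}_2^2 + 2\norm{\sigma - \rho_1}_2^2$, whose defect is precisely $\norm{\rho_0 + \rho_1 - 2\sigma}_2^2 = \tr(M^2)$ for your $M$. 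What your presentation buys is economy — no square roots ever appear, and the slack in the bound is exhibited explicitly as $\tr(M^2)$ plus the purity defects, which also makes the equality case ($\rho_0 + \rho_1 = 2\sigma$ with all three states pure, hence all equal) transparent. Both arguments are dimension-independent for the same reason.
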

\begin{proof}
The triangle inequality for the Schatten 2-norm (the Frobenius norm) implies that
\begin{equation*}
\norm{ \rho_{0} - \rho_{1} }_{2} \leq \norm{ \rho_{0} - \sigma }_{2} + \norm{ \sigma - \rho_{1} }_{2},
\end{equation*}
which can be written as
\begin{equation*}
\sqrt{ \ave{ \rho_{0}, \rho_{0} } + \ave{ \rho_{1}, \rho_{1} } - 2 \ave{ \rho_{0}, \rho_{1} } } \leq \sqrt{ \ave{ \rho_{0}, \rho_{0} } + \ave{ \sigma, \sigma } - 2 \ave{ \rho_{0}, \sigma } } + \sqrt{ \ave{ \rho_{1}, \rho_{1} } + \ave{ \sigma, \sigma } - 2 \ave{ \rho_{1}, \sigma } }.
\end{equation*}
Since both sides are non-negative, we can square the inequality to obtain
\begin{equation*}
- \ave{ \rho_{0}, \rho_{1} } \leq \ave{ \sigma, \sigma } - \ave{ \rho_{0}, \sigma } - \ave{ \rho_{1}, \sigma } + \sqrt{ \big( \ave{ \rho_{0}, \rho_{0} } + \ave{ \sigma, \sigma } - 2 \ave{ \rho_{0}, \sigma } \big) \big( \ave{ \rho_{1}, \rho_{1} } + \ave{ \sigma, \sigma } - 2 \ave{ \rho_{1}, \sigma } \big) }.
\end{equation*}
The fact that for an arbitrary density matrix $\tau$ we have $\ave{ \tau, \tau } \leq 1$ gives
\begin{equation*}
- \ave{ \rho_{0}, \rho_{1} } \leq 1 - \ave{ \rho_{0}, \sigma } - \ave{ \rho_{1}, \sigma } + 2 \sqrt{ \big( 1 - \ave{ \rho_{0}, \sigma } \big) \big( 1 - \ave{ \rho_{1}, \sigma } \big) }.
\end{equation*}
We bound the last term using the mean inequality $\sqrt{ a b } \leq ( a + b )/2$ which leads to
\begin{equation*}
- \ave{ \rho_{0}, \rho_{1} } \leq 3 - 2 \big( \ave{ \rho_{0}, \sigma } + \ave{ \rho_{1}, \sigma } \big).
\end{equation*}
\end{proof}
The second lemma formalises the intuition that an arbitrary channel acting jointly on a classical and quantum register can be replaced by a channel that reads the classical register and acts on the quantum register accordingly.
\begin{lemma}
\label{lem:classical-quantum-channel}
Let $\sH_{C}, \sH_{Q}$ and $\sH_{A}$ be Hilbert spaces of dimensions $d_{C}, d_{Q}$ and $d_{A}$, respectively. Let $\{ \ket{e_{j}} \}_{j = 1}^{ d_{C} }$ be an orthonormal basis of $\sH_{C}$ and we say that $R_{CQ}$ is a classical-quantum operator acting on $\sH_{C} \otimes \sH_{Q}$ if it can be written as
\begin{equation}
\label{eq:classical-quantum-operator}
R_{CQ} = \sum_{j} \ketbraq{e_{j}} \otimes S_{j}
\end{equation}
for some linear operators $S_{j} \in \cL( \sH_{Q} )$. Then, for an arbitrary channel $\Lambda : \cL( \sH_{C} \otimes \sH_{Q} ) \to \cL( \sH_{A} )$ there exists a collection of $d_{C}$ channels $\Lambda_{j} : \cL( \sH_{Q} ) \to \cL( \sH_{A} )$ such that for all operators of the form~\eqref{eq:classical-quantum-operator} we have
\begin{equation}
\label{eq:channel-equality}
\Lambda( R_{CQ} ) = \sum_{j} \Lambda_{j}( S_{j} ).
\end{equation}
\end{lemma}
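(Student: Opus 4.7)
The plan is to define the channels $\Lambda_j$ explicitly by restriction and then verify the three required properties (identity on classical-quantum operators, complete positivity, trace preservation) using the corresponding properties of $\Lambda$ together with the fact that each $\ketbraq{e_j}$ is a rank-one projector of unit trace.

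Concretely, first I would set
\begin{equation*}
\Lambda_{j}(S) := \Lambda\big(\ketbraq{e_{j}} \otimes S\big)
\end{equation*}
for every $S \in \cL(\sH_{Q})$ and every $j \in \{1, \ldots, d_{C}\}$. Each $\Lambda_{j}$ is linear because $\Lambda$ is linear and the map $S \mapsto \ketbraq{e_{j}} \otimes S$ is linear. The identity~\eqref{eq:channel-equality} then follows at once by applying $\Lambda$ term-by-term to the decomposition~\eqref{eq:classical-quantum-operator}.

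Next I would verify that each $\Lambda_{j}$ is a valid quantum channel. For complete positivity, let $\sH_{R}$ be an auxiliary space and $T \in \cL(\sH_{Q} \otimes \sH_{R})$ with $T \geq 0$. Since $\ketbraq{e_{j}} \geq 0$, the operator $\ketbraq{e_{j}} \otimes T$ is positive semidefinite on $\sH_{C} \otimes \sH_{Q} \otimes \sH_{R}$, and by complete positivity of $\Lambda$ we obtain
\begin{equation*}
(\Lambda_{j} \otimes \id_{R})(T) = (\Lambda \otimes \id_{R})\big(\ketbraq{e_{j}} \otimes T\big) \geq 0.
\end{equation*}
For trace preservation I would simply compute
\begin{equation*}
\tr\big(\Lambda_{j}(S)\big) = \tr\big(\Lambda(\ketbraq{e_{j}} \otimes S)\big) = \tr\big(\ketbraq{e_{j}} \otimes S\big) = \tr(S),
\end{equation*}
using that $\Lambda$ is trace-preserving and $\tr\ketbraq{e_{j}} = 1$.

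There is no genuine obstacle here: the lemma is really just the observation that a quantum channel, restricted to the subspace of states diagonal in a fixed classical basis on one tensor factor, decomposes into a family of channels indexed by the classical label. The only subtlety worth flagging is that equality~\eqref{eq:channel-equality} is only claimed on classical-quantum operators of the form~\eqref{eq:classical-quantum-operator} (and not on arbitrary operators on $\sH_{C} \otimes \sH_{Q}$), which is exactly what is needed in the main argument where every input is already block-diagonal in the $\{\ket{e_{j}}\}$ basis.
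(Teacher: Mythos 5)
Your proposal is correct and takes exactly the same route as the paper: both define $\Lambda_{j}(S) := \Lambda(\ketbraq{e_{j}} \otimes S)$ and note that \eqref{eq:channel-equality} holds by construction. You merely spell out the verification of complete positivity and trace preservation, which the paper asserts without detail.
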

\begin{proof}
We define the channel $\Lambda_{j}$ through its action on an arbitrary operator $X \in \cL( \sH_{ Q } )$. Let
\begin{equation*}
\Lambda_{j}( X ) := \Lambda \big( \ketbraq{ e_{j} } \otimes X \big),
\end{equation*}
which ensures that $\Lambda_{j}$ is completely positive and trace-preserving. The equality~\eqref{eq:channel-equality} holds by construction.
\end{proof}
The last lemma shows that if a channel maps the maximally mixed state to a state which is close to being pure, then this channel must contract all the Pauli observables.
\begin{lemma}
\label{lem:spectrum-bound}
Let $\Lambda$ be a qubit quantum channel, let
\begin{equation*}
\omega := \Lambda \bigg( \frac{ \unit_2 }{2} \bigg) \, ,
\end{equation*}
and suppose that $\spec( \omega ) = \{ \lambda, 1 - \lambda \}$ for $\lambda \in [0, 1/2]$. Let $\Gamma$ be a $2 \times 2$ Hermitian operator satisfying $\Gamma^{2} = \unit$ and $\tr \Gamma = 0$. Then,
\begin{equation*}
- 2 \sqrt{\lambda} \, \unit_{2} \leq \Lambda( \Gamma) \leq 2 \sqrt{\lambda} \, \unit_{2}.
\end{equation*}
\end{lemma}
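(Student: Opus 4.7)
The plan is to reduce the statement to the Bloch representation of qubits and conclude via the parallelogram identity. Since $\Gamma$ is Hermitian with $\Gamma^{2} = \unit$ and $\tr \Gamma = 0$, the operators $P_{\pm} := (\unit \pm \Gamma)/2$ are complementary rank-one projectors, hence pure qubit states. Set $\omega_{\pm} := \Lambda(P_{\pm})$; because $\Lambda$ is completely positive and trace-preserving, each $\omega_{\pm}$ is a density matrix. By linearity of $\Lambda$,
\begin{equation*}
\omega \;=\; \tfrac{1}{2}(\omega_{+} + \omega_{-}) \qquad \text{and} \qquad \Lambda(\Gamma) \;=\; \omega_{+} - \omega_{-},
\end{equation*}
so the task reduces to showing $\norm{\omega_{+} - \omega_{-}}_{\infty} \leq 2\sqrt{\lambda}$ whenever the midpoint $\omega$ has smallest eigenvalue $\lambda$.

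Next I would pass to Bloch-vector coordinates, writing $\omega_{\pm} = \frac{1}{2}(\unit + \vec{r}_{\pm}\cdot\vec{v})$ where $\vec{v} = (\X,\Y,\Z)$ and $|\vec{r}_{\pm}| \leq 1$ (which is equivalent to $\omega_{\pm}$ being a valid density matrix), and similarly $\omega = \frac{1}{2}(\unit + \vec{r}\cdot\vec{v})$ with $\vec{r} = \frac{1}{2}(\vec{r}_{+} + \vec{r}_{-})$. The spectral assumption $\spec(\omega) = \{\lambda, 1-\lambda\}$ translates to $|\vec{r}| = 1 - 2\lambda$, while the operator we care to control is the traceless Hermitian $2 \times 2$ matrix $\Lambda(\Gamma) = \frac{1}{2}(\vec{r}_{+} - \vec{r}_{-})\cdot\vec{v}$, whose operator norm is exactly $\frac{1}{2}|\vec{r}_{+} - \vec{r}_{-}|$.

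The finishing move is the parallelogram identity in $\amsbb{R}^{3}$,
\begin{equation*}
|\vec{r}_{+} - \vec{r}_{-}|^{2} + |\vec{r}_{+} + \vec{r}_{-}|^{2} \;=\; 2 \bigl( |\vec{r}_{+}|^{2} + |\vec{r}_{-}|^{2} \bigr) \;\leq\; 4,
\end{equation*}
which, together with $|\vec{r}_{+} + \vec{r}_{-}| = 2(1 - 2\lambda)$, yields
\begin{equation*}
|\vec{r}_{+} - \vec{r}_{-}|^{2} \;\leq\; 4 - 4(1 - 2\lambda)^{2} \;=\; 16\lambda(1-\lambda) \;\leq\; 16\lambda,
\end{equation*}
equivalently $\norm{\Lambda(\Gamma)}_{\infty} \leq 2\sqrt{\lambda}$, which is precisely the claimed operator inequality. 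I do not foresee any genuine obstacle: the decomposition $\Gamma = P_{+} - P_{-}$ is forced by $\Gamma^{2} = \unit$ and $\tr\Gamma = 0$, and all the remaining work is done by a single application of the parallelogram law. The only minor point worth stressing in a write-up is that $|\vec{r}_{\pm}| \leq 1$ is automatic from $\omega_{\pm}$ being density matrices, so no extra use of the channel's structure beyond positivity and trace preservation is needed.
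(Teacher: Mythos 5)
Your proof is correct. It is, at its core, the same argument as the paper's, just written in Bloch-vector coordinates: the paper derives $-2\omega \leq \Lambda(\Gamma) \leq 2\omega$ from $\Lambda(\unit \pm \Gamma) \geq 0$ (i.e.\ from positivity of your $\omega_{\pm}$), works in the eigenbasis of $\omega$, and adds the two resulting determinant conditions to get $t^{2} + \abs{y}^{2} \leq 4\lambda(1-\lambda)$ --- and summing those two determinant inequalities is precisely your parallelogram identity $\abs{\vec{r}_{+} - \vec{r}_{-}}^{2} + \abs{\vec{r}_{+} + \vec{r}_{-}}^{2} \leq 4$ in disguise. Your presentation is arguably cleaner and makes the geometric content transparent (the two image states must straddle a midpoint of radius $1 - 2\lambda$, so they cannot be far apart), and like the paper's it only uses positivity and trace preservation, not complete positivity; both versions even recover the slightly stronger intermediate bound $16\lambda(1-\lambda)$ before relaxing to $16\lambda$.
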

\begin{proof}
Since the quantum channel is a positive map, we have $\Lambda ( \unit_2 \pm \Gamma ) \geq 0$ or, equivalently $ - 2 \omega \leq \Lambda( \Gamma ) \leq 2\omega$. We start by writing both operators in the eigenbasis of $\omega$
\begin{equation*}
\omega =
\left(
\begin{array}{cc}
\lambda &\\
& 1 - \lambda
\end{array}
\right) \nbox{and}
\Lambda(\Gamma) =
\left(
\begin{array}{cc}
t & y\\
y^{*} & -t
\end{array}
\right)
\end{equation*}
for some $t \in \amsbb{R}$ and $y \in \amsbb{C}$. Note that we have implicitly used the fact that $\Lambda(\Gamma)$ is Hermitian and traceless. The condition $\Lambda(\Gamma) \geq - 2 \omega$ reads
\begin{equation*}
\left(
\begin{array}{cc}
2 \lambda + t & y\\
y^{*} & 2 - 2 \lambda - t
\end{array}
\right) \geq 0
\end{equation*}
and implies that
\begin{equation*}
( 2 \lambda + t ) ( 2 - 2 \lambda - t ) - \abs{y}^2 \geq 0.
\end{equation*}
Similarly, the condition $\Lambda(\Gamma) \leq 2 \omega$ leads to
\begin{equation*}
( 2 \lambda - t ) ( 2 - 2 \lambda + t ) - \abs{y}^2 \geq 0.
\end{equation*}
Adding up these two conditions gives
\begin{equation*}
t^{2} + \abs{y}^{2} \leq 4 \lambda ( 1 - \lambda) \leq 4 \lambda.
\end{equation*}
As the eigenvalues of $\Lambda(\Gamma)$ are easily seen to be $\pm \sqrt{ t^{2} + \abs{y}^{2} }$, the claim follows directly from the last inequality.
\end{proof}
Equipped with these three auxiliary lemmas we are ready to tackle the two propositions used in the main argument.
\begin{prop:counterexample-1}
\propewaveigenvalue
\end{prop:counterexample-1}
\begin{proof}
The proof consists of three steps. We first consider the four corner points, i.e.~$(x, y) \in \{ (0, 0), (0, 2), (2, 0), (2, 2) \}$ and show that the channels $\Lambda_{A}^{0}$ and $\Lambda_{B}^{0}$ map the entire Bloch sphere to a small region close to the boundary. In the second step we consider the points $(x, y) \in \{ (0, 1), (1, 0) \}$ to show that the channels $\Lambda_{A}^{1}$ and $\Lambda_{B}^{1}$ have the same property. In the last step we compute an explicit bound on the eigenvalues of $\omega_{A}$ and $\omega_{B}$.

For $b \in \{0, 1\}$ and $x, y \in \{0, 1, 2\}$ define
\begin{align*}
\sigma_{b}^{x} &:= \Lambda_{A}^{x} ( \ketbraq{b} ),\\
\tau_{b}^{y} &:= \big[ \Lambda_{B}^{y} ( \ketbraq{b} ) \big]\tran,
\end{align*}
which implies that
\begin{equation*}
\ave[\big]{ ( \Lambda_{A}^{x} \otimes \Lambda_{B}^{y} )( \ketbraq{b} \otimes \ketbraq{b'} ), \targetstate^{+} } = \ave[\big]{ \Lambda_{A}^{x} ( \ketbraq{b} ) \otimes \Lambda_{B}^{y} ( \ketbraq{b'} ), \targetstate^{+} } = \frac{1}{2} \ave{ \sigma_{b}^{x}, \tau_{b'}^{y} }.
\end{equation*}
Therefore, Eq.~\eqref{eq:epsilonxy} imposes constraints on the inner products between the operators $\sigma_{b}^{x}$ and $\tau_{b}^{y}$. Considering points $(x, y) = (0, 0), (0, 2), (2, 0), (2, 2)$ gives
\begin{align}
\label{eq:epsilon00}
\ave{ \sigma_{1}^{0}, \tau_{1}^{0} } &= 1 - 2\varepsilon_{00},\\
\label{eq:epsilon02}
\ave{ \sigma_{0}^{0}, \tau_{0}^{2} } + \ave{ \sigma_{1}^{0}, \tau_{1}^{2} } &= 2 - 4\varepsilon_{02},\\
\label{eq:epsilon20}
\ave{ \sigma_{0}^{2}, \tau_{0}^{0} } + \ave{ \sigma_{1}^{2}, \tau_{1}^{0} } &= 2 - 4\varepsilon_{20},\\
\label{eq:epsilon22}
\ave{ \sigma_{0}^{2}, \tau_{1}^{2} } + \ave{ \sigma_{1}^{2}, \tau_{0}^{2} } &= 2 - 4\varepsilon_{22}.
\end{align}
Plugging the upper bound $\ave{ \sigma_{b}^{x}, \tau_{b'}^{y} } \leq 1$ into Eq.~\eqref{eq:epsilon02} immediately gives
\begin{equation*}
\ave{ \sigma_{1}^{0}, \tau_{1}^{2} } \geq 1 - 4\varepsilon_{02},
\end{equation*}
which combined with Eq.~\eqref{eq:epsilon00} by Lemma~\ref{lem:inner-product-triangle} gives
\begin{equation}
\label{eq:tau1}
\ave{ \tau_{1}^{0}, \tau_{1}^{2} } \geq 1 - 4 ( \varepsilon_{00} + 2 \varepsilon_{02} ).
\end{equation}
Similarly, Eqs.~\eqref{eq:epsilon20} and~\eqref{eq:epsilon22} imply
\begin{align*}
\ave{ \sigma_{0}^{2}, \tau_{0}^{0} } &\geq 1 - 4 \varepsilon_{20},\\
\ave{ \sigma_{0}^{2}, \tau_{1}^{2} } &\geq 1 - 4 \varepsilon_{22},
\end{align*}
which gives
\begin{equation*}
\ave{ \tau_{0}^{0}, \tau_{1}^{2} } \geq 1 - 8 ( \varepsilon_{20} + \varepsilon_{22} ).
\end{equation*}
Combining this with Eq.~\eqref{eq:tau1} gives
\begin{equation}
\label{eq:taus}
\ave{ \tau_{0}^{0}, \tau_{1}^{0} } \geq 1 - 8 \big[ \varepsilon_{00} + 2 ( \varepsilon_{02} + \varepsilon_{20} + \varepsilon_{22}) \big],
\end{equation}
which concludes the first step of the proof. This lower bound implies that the states $\tau_{0}^{0}$ and $\tau_{1}^{0}$ are close to each other and, moreover, that they are close to being pure. Since these two states result from applying the channel $\Lambda_{B}^{0}$ to two pure orthogonal states, we conclude that the channel must shrink the entire Bloch sphere to a small region close to the boundary.

Considering the point $(x, y) = (1, 0)$ gives
\begin{equation*}
\ave{ \sigma_{0}^{1}, \tau_{0}^{0} } + \ave{ \sigma_{1}^{1}, \tau_{1}^{0} } = 2 - 4\varepsilon_{10}.
\end{equation*}
Define $a, b \geq 0$ such that
\begin{align}
\label{eq:sigma-tau1}
\ave{ \sigma_{0}^{1}, \tau_{0}^{0} } &= 1 - a,\\
\label{eq:sigma-tau2}
\ave{ \sigma_{1}^{1}, \tau_{1}^{0} } &= 1 - b,
\end{align}
which implies that $a + b = 4\varepsilon_{10}$. Applying the inner-product inequality proven in Lemma~\ref{lem:inner-product-triangle} to Eqs.~\eqref{eq:taus}, \eqref{eq:sigma-tau1} and \eqref{eq:sigma-tau2} gives
\begin{equation*}
\ave{ \sigma_{0}^{1}, \sigma_{1}^{1} } \geq 1 - 32 \big[ \varepsilon_{00} + 2 ( \varepsilon_{02} + \varepsilon_{20} + \varepsilon_{22}) \big] - 4 a - 2 b
\end{equation*}
or
\begin{equation*}
\ave{ \sigma_{0}^{1}, \sigma_{1}^{1} } \geq 1 - 32 \big[ \varepsilon_{00} + 2 ( \varepsilon_{02} + \varepsilon_{20} + \varepsilon_{22}) \big] - 2 a - 4 b
\end{equation*}
depending on the order. Averaging over these two bounds gives
\begin{align*}
\ave{ \sigma_{0}^{1}, \sigma_{1}^{1} } &\geq 1 - 32 \big[ \varepsilon_{00} + 2 ( \varepsilon_{02} + \varepsilon_{20} + \varepsilon_{22}) \big] - 3 a - 3 b\\
&= 1 - 32 \big[ \varepsilon_{00} + 2 ( \varepsilon_{02} + \varepsilon_{20} + \varepsilon_{22}) \big] - 12 \varepsilon_{10}\\
&= 1 - \finexp{10},
\end{align*}
which concludes the second step of the proof.

The density matrix $\omega_{A}$ defined in the proposition is given by
\begin{equation*}
\omega_{A} = \Lambda_{A}^{1} \bigg( \frac{ \unit_2 }{2} \bigg) = \frac{1}{2} ( \sigma_{0}^{1} + \sigma_{1}^{1} ).
\end{equation*}
Clearly, $\tr \omega_{A} = 1$ and
\begin{align*}
\tr \omega_{A}^{2} &= \frac{1}{4} \big[ \ave{ \sigma_{0}^{1}, \sigma_{0}^{1} } + \ave{ \sigma_{1}^{1}, \sigma_{1}^{1} } + 2 \ave{ \sigma_{0}^{1}, \sigma_{1}^{1} } \big]\\
&= \frac{1}{4} \big[ \ave{ \sigma_{0}^{1}, \sigma_{0}^{1} } + \ave{ \sigma_{1}^{1}, \sigma_{1}^{1} } - 2 \ave{ \sigma_{0}^{1}, \sigma_{1}^{1} } \big] + \ave{ \sigma_{0}^{1}, \sigma_{1}^{1} }\\
&= \frac{1}{4} \big[ \ave{ \sigma_{0}^{1} - \sigma_{1}^{1}, \sigma_{0}^{1} - \sigma_{1}^{1} } \big] + \ave{ \sigma_{0}^{1}, \sigma_{1}^{1} } \geq \ave{ \sigma_{0}^{1}, \sigma_{1}^{1} }.
\end{align*}
We take advantage of the fact that for $2 \times 2$ Hermitian matrices we have $ [\tr(M)]^{2} = \tr ( M^{2} ) + 2 \det(M)$. If $\lambda$ is the smaller eigenvalue of $\omega_{A}$, then
\begin{align*}
\lambda \leq 2 \lambda ( 1 - \lambda ) &= 2 \det( \omega_{A} ) = [\tr(\omega_{A})]^{2} - \tr ( \omega_{A}^{2} ) = 1 - \tr( \omega_{A}^{2} )\\
&\leq 1 - \ave{ \sigma_{0}^{1}, \sigma_{1}^{1} } \leq \finexp{10},
\end{align*}
which concludes the last step of the proof of the first statement. The proof of the second statement is essentially the same. From Eqs.~\eqref{eq:epsilon00} and \eqref{eq:epsilon20} we obtain
\begin{equation*}
\ave{ \sigma_{1}^{0}, \sigma_{1}^{2} } \geq 1 - 4 ( \varepsilon_{00} + 2 \varepsilon_{20} ),
\end{equation*}
whereas Eqs.~\eqref{eq:epsilon02} and~\eqref{eq:epsilon22} imply
\begin{equation*}
\ave{ \sigma_{0}^{0}, \sigma_{1}^{2} } \geq 1 - 8 ( \varepsilon_{02} + \varepsilon_{22} ).
\end{equation*}
Combining these yields
\begin{equation*}
\ave{ \sigma_{0}^{0}, \sigma_{1}^{0} } \geq 1 - 8 \big[ \varepsilon_{00} + 2 ( \varepsilon_{02} + \varepsilon_{20} + \varepsilon_{22}) \big]
\end{equation*}
and by adding the point $(x, y) = (0, 1)$ we arrive at
\begin{equation*}
\ave{ \tau_{0}^{1}, \tau_{1}^{1} } \geq 1 - \finexp{01}.
\end{equation*}
Finally, we note that $\omega_{B}\tran = ( \tau_{0}^{1} + \tau_{1}^{1} ) / 2$, but since the transpose does not affect the spectrum, the final calculation is precisely the same.
\end{proof}
\begin{prop:counterexample-2}
\propcentrebound
\end{prop:counterexample-2}
\begin{proof}
Since the statement is invariant under local unitaries, we can without loss of generality assume that $\Psi_{1}$ is the usual maximally entangled state, i.e.
\begin{equation*}
\Psi_{1} = \frac{1}{4} \Big( \unit \otimes \unit + \X \otimes \X - \Y \otimes \Y + \Z \otimes \Z \Big).
\end{equation*}
Note that $\Psi_{1}$ can be written as
\begin{equation*}
\Psi_{1} = \tau + \frac{1}{4} ( \X \otimes \X + \Z \otimes \Z \Big),
\end{equation*}
where $\tau = ( \unit \otimes \unit - \Y \otimes \Y ) / 4$. Since $\tau$ is a separable state, we have
\begin{equation*}
\ave[\big]{ ( \Lambda_{A} \otimes \Lambda_{B} )( \tau ), \Psi_{2} } \leq \frac{1}{2}.
\end{equation*}
To bound the other two terms we use Lemma~\ref{lem:spectrum-bound}, which in particular implies that
\begin{equation*}
\Lambda_{A}(\X) \otimes \Lambda_{B}(\X) \leq 4 \lambda \, \unit_{4}.
\end{equation*}
Therefore,
\begin{equation*}
\ave[\big]{ \Lambda_{A}(\X) \otimes \Lambda_{B}(\X), \Psi_{2} } \leq 4 \lambda.
\end{equation*}
The same argument applied to $\Lambda_{A}(\Z) \otimes \Lambda_{B}(\Z)$ concludes the proof.
\end{proof}
\end{document}